\newtheorem{theorem}{Theorem}
\newtheorem{lemma}{Lemma}
\newtheorem{corollary}{Corollary}
\newtheorem{proof}{Proof}
\newtheorem{proposition}{Proposition}
\begin{document}

\title{Wireless Covert Communications \\Aided by Distributed Cooperative Jamming \\over Slow Fading Channels}
\author{Tong-Xing~Zheng,~\IEEEmembership{Member,~IEEE},
Ziteng~Yang,
Chao~Wang,~\IEEEmembership{Member,~IEEE},
Zan~Li,~\IEEEmembership{Senior Member,~IEEE}, ~Jinhong~Yuan,~\IEEEmembership{Fellow,~IEEE},
and~Xiaohong~Guan,~\IEEEmembership{Fellow,~IEEE}
	\thanks{T.-X.~Zheng and Z. Yang are with the School of Information and Communications Engineering, Xi'an Jiaotong University, Xi'an 710049, China. T.-X.~Zheng is also with the National Mobile Communications Research Laboratory, Southeast University, Nanjing 210096, China. T.-X. Zheng and X. Guan are also with the Ministry of Education Key Laboratory for Intelligent Networks and Network Security, Xi'an Jiaotong University, Xi'an 710049, China.
	(e-mail: zhengtx@mail.xjtu.edu.cn, yzy543251162@stu.xjtu.edu.cn, xhguan@mail.xjtu.edu.cn).}
	\thanks{C. Wang and Z. Li are with the State Key Laboratory of Integrated Services Networks, Xidian University, Xi'an 710071, China. C. Wang is also with the National Mobile Communications Research Laboratory, Southeast University, Nanjing 210096, China. (e-mail: drchaowang@126.com, zanli@xidian.edu.cn).}
	\thanks{J. Yuan is with the School of Electrical Engineering and	Telecommunications, University of New South Wales, Sydney, NSW 2052, Australia (e-mail: j.yuan@unsw.edu.au).}
}

\maketitle
\vspace{-0.8 cm}

\begin{abstract}
    In this paper, we study covert communications between {a pair of} legitimate transmitter-receiver against a watchful warden over slow fading channels.
    There coexist multiple friendly helper nodes who are willing to protect the covert communication from being detected by the warden.
    We propose an uncoordinated jammer selection scheme where those helpers whose instantaneous channel gains to the legitimate receiver fall below a pre-established selection threshold will be chosen as jammers radiating jamming signals to defeat the warden.
    By doing so, the detection accuracy of the warden is expected to be severely degraded while the desired covert communication is rarely affected.
    We then jointly design the optimal selection threshold and message transmission rate for maximizing covert throughput under the premise that the detection error of the warden exceeds a certain level. Numerical results are presented to validate our theoretical analyses. It is shown that the multi-jammer assisted covert communication outperforms the conventional single-jammer method in terms of covert throughput, and the maximal covert throughput improves significantly as the total number of helpers increases, which demonstrates the validity and superiority of our proposed scheme.
\end{abstract}

\begin{IEEEkeywords}
    Covert communications, cooperative jamming, jammer selection, outage probability, covert throughput.
\end{IEEEkeywords}

\IEEEpeerreviewmaketitle

\section{Introduction}

With the advent of the 5G information age, the dependence of human being on wireless networks has grown rapidly.
An unprecedentedly large amount of confidential and sensitive information, e.g., ID and credit card details, etc, is being delivered through wireless media.
Security and privacy {have} become ever-increasingly crucial for wireless networks since wireless transmission is particularly prone to security attacks such as eavesdropping and surveillance, etc, due to the broadcast nature of  electromagnetic propagation.
In tradition, the information security issue is addressed at the upper layers of a communication system leveraging the key-based cryptographic encryption \cite{Menezes1996Handbook}, \cite{Talbot2006Complexity}.
However, even the encryption algorithms recognized complicated enough currently are still at risk of being broken by supercomputers and massively parallel computing.
On the other hand, information-theoretic security has appeared as an appealing physical-layer alternative to complement conventional upper-layer security protocols, which aims to {achieve} wireless secrecy by adequately exploiting the intrinsic characteristics of wireless channels, thus circumventing the employment of secrecy keys as well as sophisticated encryption algorithms \cite{Bloch2011Physical, Wang2016Physical, Yang2015Safeguarding,Zou2016Survey}.
It should be pointed out that both upper- and physical-layer security mechanisms have mainly been designed for protecting confidential content from being intercepted by unauthorized third parties, whereas few has been concentrated on shielding the communication itself to escape monitoring from watchful adversaries.

In fact, to guarantee the concealment of message delivery is urgently required for the purpose of a high level of security for various wireless applications, e.g., secret military operations, authoritarian government monitoring, and internal interaction of the Internet of Things (IoTs) \cite{Mukherjee2015Physical}, etc.
The exposure of communication process will cause suspicion at the monitor and hence attract further signal demodulation and information decipher.
In order to overcome the above problem and to provide a more direct and powerful safeguard on security and privacy for wireless communications systems, covert communications has emerged as a new paradigm for wireless security and {it} has attracted significant research interest recently \cite{Yan2019Low}.

\subsection{Previous Works}
Covert communications aims to hide the very existence of communication behaviors from a vigilant warden, i.e., guaranteeing a {very} low probability of being detected by the warden, while attaining a certain level of decoding performance for the target receiver.
Although spread-spectrum techniques have been widely used to obscure military communications against adversaries in the early 20 century \cite{Simon1994Simon}, many fundamental issues have not been well tackled, e.g., it is still challenging to answer for spread-spectrum techniques that \emph{how much information can be successfully delivered subject to a certain level of covertness requirement?}
It is until recently that the fundamental limits of covert communications have been characterized from an information-theoretic viewpoint. In \cite{Bash2013Limits}, a square root law was first established for covert communications in additive Gaussian white noise (AWGN) channels, which states that at most {$\mathcal{O}(\sqrt{n})$} information bits can be delivered reliably and covertly simultaneously to the intended receiver in $n$ channel uses under the surveillance of a warden.
This pioneering work has laid an information-theoretic foundation for the research of covert communications and has opened up two research directions: 1) explore fundamental limits of covert communications based on an information-theoretic framework and devise practical code structures to realize covert communications; 2) excavate the feasibility of covert communications for realistic circumstances and exploit advanced techniques to improve covert communications.

Along the first research direction mentioned above, researchers have generalized the work of \cite{Bash2013Limits} into various wireless channel models, such as binary symmetrical channels (BSC) \cite{Che2013Reliable}, discrete memoryless channels (DMC) \cite{Bloch2016Covert,Wang2016Fundamental}, multiple access channels \cite{Arumugam2016Keyless}, parallel AWGN channels \cite{Letzepis2016Optimal}, and multi-input multi-output (MIMO) AWGN channels \cite{Abdelaziz2017Fundamental}, etc.
These works have been focused on capturing the scaling law of the amount of covertly delivered information on $\sqrt{n}$ and examining the feasibility of the square root law proposed in \cite{Bash2013Limits}.
It is not difficult to find that, according to the square root law, the achievable covert rate would approach zero as $n\rightarrow\infty$, which is definitely undesired for practical applications.
In fact, such a square root law is overly pessimistic as the intrinsic uncertainty of wireless channels and noise has not been incorporated therein, which then has stimulated the second research direction as will be detailed below.

Recently, extensive efforts have been devoted to exploring the opportunity and condition for attaining a positive covert rate.
Specifically, the authors in \cite{Lee2015Achieving} and \cite{Goeckel2016Covert} proved that a positive covert rate can be achieved when the background noise power is unknown to the warden.
The authors in \cite{He2017On} analyzed the maximal achievable covert rate under both bounded and unbounded noise uncertainty models.
The authors in \cite{Shahzad2017Covert,Yan2017Covert,Tang2018Covert,Yan2018Delay} evaluated the influence of imperfect channel state information (CSI) and finite blocklength on covert communications, respectively.
In particular, the authors in \cite{Yan2018Delay} proved that increasing the blocklength is always rewarding for covert throughput for a delay-sensitive system, and adopting a random transmit power can provide additional performance gain.
The authors in \cite{He2018Covert} exploited ambient interference signals to improve the rate of covert communications.
The authors in \cite{Zheng2019Multi} further explored the benefits of both centralized and distributed multi-antenna transmitters on covert communications against randomly located wardens.
The authors in a recent work \cite{Shahzad2019Covert} considered a multi-antenna warden scenario, showing that even a slight increase in the antenna number at the warden side can result in a dramatic deterioration in covert throughput.

Node cooperation has also been leveraged to enable and enhance covert communications recently.
For instance, the authors in \cite{Wang2019Covert} investigated the dual-hop covert communication under fading channels.
It was shown that with the aid of a relay and the exploitation of channel uncertainty, $\mathcal{O}(n)$ information bits can be conveyed reliably and covertly in $n$ channel uses, which broke through the square root law proposed in \cite{Bash2013Limits}.
The authors in \cite{Sheikholeslami2018Multi} examined multi-hop routing in covert wireless networks under AWGN channels. The authors in \cite{Wang2020Secrecy} further studied secrecy and covert communications via a multi-hop networks under the surveillance of an unmanned aerial vehicle (UAV).
It is worth mentioning that, instead of assisting the covert communication, relay nodes might also privately send messages to the destination node by greedily utilizing the resources provided by the source node.
In this case, the source node becomes a warden, whose mission is to monitor in real-time the covert communication from the untrusted relay. The interested readers are referred to \cite{Hu2018Covert,Hu2019Covert} for more details in this direction.

\subsection{Motivation and Our Contributions}
The aforementioned endeavors on covert communications have mainly focused on taking advantage of already-existed favorable environmental uncertainty, such as noise, channel uncertainty, and auxiliary relays, etc.
Different from these, recent studies have revealed that intentionally {introducing} interference, e.g., via a friendly jamming node using random transmit power \cite{Sobers2017Covert} or via a full-duplex receiver \cite{Hu2018Covert_C,Shahzad2018Achieving,Shu2019Delay} with either fixed or random jamming power, could also be beneficial for covert communications, since additional uncertainty and thus more confusion can be created to the warden when {the warden} is analyzing the statistics of perceived energy.
Inspired by this idea, in this paper we, for the first time, investigate covert communications assisted by multiple friendly jammers over slow fading channels. We aim to excavate more potential of cooperative jamming in terms of improving covert communications compared with that of a single helper.
We propose to intelligently select jammers based on their individual channel status in order to balance covertness and throughput performance.
To the best of our knowledge, this idea has not yet been reported by the literature on covert communications.
The main contributions of our work are summarized as follows.

\begin{itemize}
\item {\color{black}We put forward an uncoordinated opportunistic jammer selection scheme to facilitate the covert communication against a warden over slow fading channels.
	To be specific, in order to degrade the detection performance of the warden while not leaving severe interference to the desired receiver, each cooperative node is prudently activated as jammer only when its instantaneous channel gain with respect to the destination receiver becomes smaller than a certain threshold.
	By doing so, even the jammers use fixed transmit power, the warden is still uncertain of the aggregate interference power due to the random number of actually activated jammers. In addition, the covertness and reliability performance can be well balanced by properly adjusting the selection threshold.}

\item Under the proposed jammer selection scheme, we first examine the worst-case scenario for covert communications where the warden is always capable of employing the optimal detection parameter to minimize detection errors.
In this scenario, we derive tractable expressions for both the detection error probability at the warden and the transmission outage probability at the desired receiver.
Subject to a covertness constraint in which the average  detection error probability is restricted to exceed a certain value, we then jointly design the optimal selection threshold and transmission rate to maximize the covert throughput.

\item We present adequate numerical results to validate our theoretical findings.
By comparing with the single-jammer aided covert communications, we demonstrate the superiority of our proposed multi-jammer scheme in terms of covert throughput enhancement.
We also provide various useful insights into the multi-jammer aided covert communication.
In particular, we show that, as the number of cooperative nodes increases, the covert throughput first significantly improves but then reaches saturation.
It is interesting to find that, in order to meet a certain level of covertness requirement, there exists a minimum number of cooperative nodes that should be deployed, and the minimum number increases as the jamming power becomes smaller.

\end{itemize}

\subsection{Organization and Notations}
The remainder of this paper is organized as follows.
Section II details the system model and relevant hypotheses.
Section III analyzes the detection error probability at the warden considering a worst-case scenario for covert communications.
In Section IV, we first derive analytical expressions for the average detection error probability and the transmission outage probability, based on which we then optimize the covert throughput subject to a covertness constraint.
Numerical results are presented in Section V, followed by some insights into parameter designs.
Section VI concludes the paper.

\emph{Notations}:
Bold lowercase letters denote column vectors.
$|\cdot|$, $(\cdot)^{\dag}$, $\mathbb{P}[\cdot]$, and $\mathbb{E}_v[\cdot]$ denote the absolute value, conjugate, probability, and mathematical expectation taken over a random variable $v$, respectively.
$\mathcal{CN}(u,\sigma^2)$ denotes the circularly symmetric complex Gaussian distribution with mean $u$ and variance $\sigma^2$. $\binom{N}{m}\triangleq \frac{N!}{m!(N-m)!}$ denotes the combinatorial number. $\Gamma(a)\triangleq\int_{0}^{\infty}e^{-t}t^{a-1}dt$, $\Gamma\left(a,x\right)\triangleq\int_{x}^{\infty}{e^{-t}t^{a-1}}dt$, and $\gamma\left(a,x\right)\triangleq\int_{0}^{x}{e^{-t}t^{a-1}}dt$ denote the gamma function, the upper incomplete gamma function, and the lower incomplete gamma function, respectively.

\section{System Model}
\subsection{Communication Scenario and Assumptions}
\begin{figure}[!t]
\centering
\includegraphics[width =3.5in]{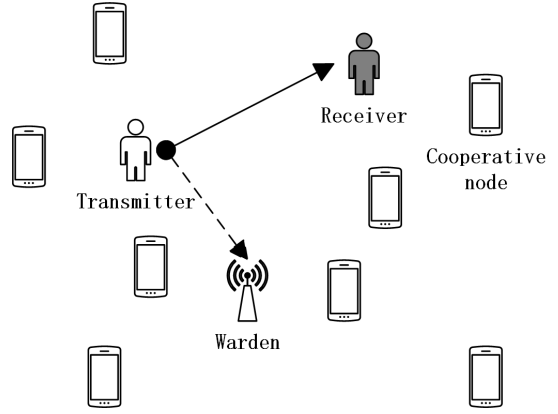}
\caption{Illustration of  covert communications in a multi-node cooperative network, where multiple Cooperative nodes help to achieve covert communications between Transmitter and Receiver in the presence of Warden.}
\label{model tu}
\vspace{-0.0 cm}
\end{figure}

We consider a cooperative covert communication network, where transmitter T aims to covertly deliver messages to receiver R under the surveillance of warden W, as depicted in Fig. \ref{model tu}.
There coexist a set of $N$ friendly nodes ${\rm J}_k$, for $k=1,\cdots,N$, who collaborate to shield the ongoing covert communication from T to R to make it as invisible as possible at W.
Specifically, a portion of the cooperative nodes will be judiciously selected acting as jammers for sending interference signals to confuse W, as will be described in the next subsection.
We assume that each node in the network is equipped with a single antenna.

To model the wireless channel between two nodes, we adopt a combined channel model incorporating a standard distance-based path loss governed by the exponent $\alpha$ along with Rayleigh fading.
We consider channel reciprocity, and the channel gains between T and R, T and W, ${\rm J}_k$ and R, and ${\rm J}_k$ and W can be expressed as $h_{t,r}d_{t,r}^{-{\alpha}/{2}}$, $h_{t,w}d_{t,w}^{-{\alpha}/{2}}$, $h_{j_k,r}d_{j_k,r}^{-{\alpha}/{2}}$ , and $h_{j_k,w}d_{j_k,w}^{-{\alpha}/{2}}$, respectively, where $h$ denotes the independent and identically distributed (i.i.d.) fading coefficient obeying $\mathcal{CN}(0,1)$, and $d$ denotes the corresponding path distance.
The network topology, including the locations of nodes and the distances between each other, is considered to be overt to all the nodes in the network.\footnote{\color{black}This assumption can be tenable for some real-life circumstances. Typical examples include that W is a government regulatory agency whose location is public to all, or communication entities are deployed on the top of conspicuous buildings such that they are visible and the distance can be easily measured. Moreover, it is a worst case that assuming W knows the location information of all nodes, which is preferred for a robust design.}

{\color{black}We consider a two-phase channel estimation scheme.
	Specifically, in the first phase, R broadcasts a public pilot sequence to enable all the $N$ friendly nodes ${\rm J}_k$ to know the instantaneous CSI of their channels to R, i.e., $h_{j_k,r}$, which is required for the jammer selection scheme as will be described in the next subsection;\footnote{We assume that Willie only focuses on monitoring the covert communication from T to R while R can be overt to Willie, and hence there is no covertness requirement for the pilot delivery of R. }
	in the second phase, T covertly sends a secret pilot sequence shared between itself and R to facilitate R to estimate their channel $h_{t,r}$ for decoding signals, and meanwhile the selected jammers send different pseudo-random pilot sequences to conceal T's pilot sequence against Willie.\footnote{As noted in \cite{Bash2013Limits}, T and R possess a common secret randomness resource in order to enable covert communications. In our channel estimation scheme, such a resource is a secret pseudo-random pilot sequence shared between T and R prior to communication which is unknown to Willie. As long as the sequence length is sufficiently large, it will become nearly orthogonal to those pilot sequences sent by jammers.
In this way, R can easily match T's pilot sequence and extract the channel $h_{t,r}$, while Willie cannot detect the pilot delivery of T  due to the cover of jamming pilot sequences.}}
We also examine two situations where W can and cannot acquire the instantaneous CSI of the channel between itself and T, i.e., $h_{t,w}$, as will be detailed in In Sec. III-B.\footnote{By assuming the availability of the instantaneous value of $h_{t,w}$ at W, we expect to explore the worst performance of covert communications.}

\subsection{Jammer Selection Scheme}
In order to conceal the covert communication against W, an uncoordinated jammer selection scheme is employed, where those cooperative nodes with their individual power gains related to R being smaller than a certain threshold $\tau$ will be activated to transmit jamming signals independently; whilst the others remain silent.\footnote{{\color{black}Our scheme requires only a low-level collaboration. This is fundamentally different from those higher-level collaboration schemes such as distributed zero-forcing jamming, which require to collect global CSI and optimize the collaborative beamformer weights coordinately, thus resulting in  a high overhead and implementation complexity due to signal sharing and synchronization.}}
Mathematically, the threshold-based jammer selection criterion can be described by the following indicator function
\begin{align}\label{characteristic I_k}
       \mathcal{I}_{j_k}=\begin{cases}
       1,&{\left|h_{j_k,r}\right|^2d_{j_k,r}^{-\alpha}}/{\sigma_r^2}<\tau,\\
       0,&{\left|h_{j_k,r}\right|^2d_{j_k,r}^{-\alpha}}/{\sigma_r^2}\ge\tau,
       \end{cases}
\end{align}
where $\mathcal{I}_{j_k}=1$ indicates that $\rm J_k$ is selected as jammer to send jamming signals, and $\mathcal{I}_{j_k}=0$ corresponds to a silent node.
Note that the proposed jammer selection criterion takes both small-scale channel fading and large-scale path loss into consideration.
In addition, the covert communication without jammers assistance or with all jammers being selected are two special cases of our scheme with $\tau=0$ and $\tau\rightarrow\infty$, respectively.

We should emphasize that, in order to confuse W, all the cooperative nodes will continuously choose to send jamming signals or keep silent according to the jammer selection criterion given in \eqref{characteristic I_k}, no matter whether T is transmitting or not.
The reason behind such a threshold-based jammer selection criterion is that there is a trade-off between the covertness and reliability for covert communications.
It is obvious that, setting a large selection threshold $\tau$ helps to activate a large number of jammers, which is beneficial for achieving covertness since in this way W will be severely confused about whether his perceived increased energy comes from T or the jammers.
In this sense, for the purpose of covertness, it is preferred to choose the value of $\tau$ as large as possible to dramatically degrade the  detection performance of W.
As a coin has two sides, an overly high selection threshold $\tau$ will definitely cause severe interference to the desired receiver, thus impairing transmission reliability.
As will be discussed later for parameter optimization, only by properly designing the selection threshold, can we strike a good balance between transmission covertness and reliability so as to achieve a high level of covert throughput.

Considering quasi-static Rayleigh fading, the cumulative distribution function (CDF) of $| h_{j_k, r} | ^ 2$ in \eqref{characteristic I_k} is denoted as $F_ {| h_ {j_k, r} | ^ 2} (x) = 1-e ^ {-x}$, and accordingly the probability of selecting ${\rm J}_k$ as jammer can be given by
\begin{equation}\label{p}
       p_{j_k} = 1 - e^{-d_{j_k,r}^{\alpha}{\sigma_r^2}\tau}.
\end{equation}

If T transmits messages in a certain time slot $i$, the signal received at R can be expressed as follows,
\begin{equation}\label{yr}
   \bm{y}_r\left[i\right] = \sqrt{P_t}\frac{h_{t,r}}{d_{t,r}^{\alpha/2}}\bm{x}_t\left[i\right]
   + \sum_{k=1}^{N}{\mathcal{I}_{j_k}\sqrt{P_j}\frac{h_{j_k,r}}{d_{j_k,r}^{\alpha/2}}\bm{v}_{j_k}\left[i\right]} + \bm{n}_r\left[i\right],
\end{equation}
where $\bm{x}_t[i]$ is the {\color{black}Gaussian} signal transmitted by T satisfying $ \mathbb{E}[\bm{x}_t[i]\bm{x}_t^{\dag}[i]] = 1$, $i = 1,2, \cdots, n$, with $n$ being the number of channel uses for delivering a whole codeword, and $P_t$ denotes the transmit power.
$\bm{v}_ {j_k}[i]$ is the {\color{black}Gaussian} jamming signal radiated by ${\rm J}_k$ in the current time slot satisfying $\mathbb{E}\left[\bm{v}_{j_k}[i] \bm{v}_{j_k}^{\dag} [i]\right] = 1$, and $P_j$ denotes the jamming power. $\bm{n}_r[i]$ is the AWGN at R with variance $\sigma_r ^ 2$, i.e., $\bm{n}_r [i ] \sim \mathcal {CN} (0, \sigma_r^2 )$.

It is worth noting that, compared with conventional jammer-assisted covert communications where the jammer uses random transmit power to confuse W, in this paper we simply consider the same fixed jamming power for all the helper nodes.
Nevertheless, we can still create uncertainty for W in terms of his received power with the aid of the proposed opportunistic jammer selection scheme.
Since W is unaware of the instantaneous channel gains between ${\rm J}_k$ and R, even the value of {\color{black}selection} threshold $\tau$ is known by W, W still cannot determine the exact number of jammers, let alone the aggregate interference power.
As a consequence, W is not able to judge whether the increased received power is due to the start of covert communications or just the activation of new jammers.

\subsection{Detection Strategy at Warden}
In this subsection, we will detail the detection strategy for W. Note that W attempts to detect whether the covert communication between T and R takes place or not in a certain time slot.
Thus, W faces a binary hypothesis testing problem, where the null hypothesis $\mathcal {H} _0$ states that T is not transmitting, while the alternative hypothesis $\mathcal{H} _1$ states that T is communicating to R. Under the above two hypotheses, the signal received at W can be expressed as below
\begin{align}\label{y_e}
       \bm{y}_w[i]=
       \begin{cases}
       \sum\limits_{k=1}^{N}{\mathcal{I}_{j_k}\sqrt{P_j}\frac{h_{j_k,w}}{d_{j_k,w}^{\alpha/2}}\bm{v}_{j_k}[i]}+\bm{n}_w[i],&\mathcal {H} _0,\\
       \sqrt{P_t}\frac{h_{t,w}}{d_{t,w}^{\alpha/2}}\bm{x}_t[i]+
       \sum\limits_{k=1}^{N}{\mathcal{I}_{j_k}\sqrt{P_j}\frac{h_{j_k,w}}{d_{j_k,w}^{\alpha/2}}\bm{v}_{j_k}[i]}+\bm{n}_w[i],&\mathcal {H} _1,
       \end{cases}
\end{align}
where $\bm{n}_w [i]$ is the AWGN at W with variance $\sigma_w ^ 2$, i.e., $\bm{n}_w [i]\sim\mathcal {CN} (0, \sigma_w ^2 )$.

{\color{black}We follow a commonly used assumption in literature on covert communications that W adopts a radiometer to detect the covert communication from T to R \cite{He2017On,He2018Covert,Zheng2019Multi,Shahzad2019Covert,Shahzad2017Covert}.
	This is not only because the radiometer is easy to implement and is of low complexity in practice, but also because of its sufficiency in detecting covert communications for the scenario under investigation as justified in Appendix \ref{proof_sufficiency_radiometer}.
	}
Specifically, W employs the average received power $T_w$ with $n$ channel uses as the test statistic, which is given below
\begin{equation}\label{average T_e}
   T_w=\frac{1}{n}\sum\limits_{i=1}^{n}\left|\bm{y}_w[i]\right|^2.
\end{equation}
The ultimate goal of W is to determine whether $\bm{y}_w$ is under $\mathcal {H} _0$ or under $\mathcal {H} _1$ by analyzing $T_w$.
{According to the mechanism of radiometer, the decision criterion is characterized as follows}
\begin{equation}\label{T_e_mu}
   T_w\mathop{\gtrless}\limits_{\mathcal{D}_0}^{\mathcal{D}_1}\mu,
\end{equation}
where $\mu$ is a predetermined detection threshold, and $\mathcal {D} _0$ and $\mathcal {D} _1$ indicate that W makes decisions in favor of $\mathcal{H} _0$ and $\mathcal{H} _1$, respectively.
We also adopt a common assumption that W is capable of using an infinite number of signal observations for ease of detection \cite{He2017On,He2018Covert,Zheng2019Multi,Shahzad2019Covert,Shahzad2017Covert},  i.e., $n \rightarrow \infty$, which leads to
\begin{align}\label{T_e}
       T_w=\begin{cases}
       \sigma_j^2+\sigma_w^2,&\mathcal{H} _0,\\
       \sigma_c^2+\sigma_j^2
       +\sigma_w^2,&\mathcal{H} _1.
       \end{cases}
\end{align}
where $\sigma_c^2\!\!\triangleq\!\! P_t{\left|h_{t,w}\right|^2}{d_{t,w}^{-\alpha}}$ and $\sigma_j^2\!\!\triangleq \!\! \sum_{k=1}^{N}{{\mathcal{I}_{j_k}}P_j{\left|h_{j_k,w}\right|^2}{d_{j_k,w}^{-\alpha}}}$ denote the power of the covert signal and the aggregate jamming signals, respectively.

In this paper, we employ the detection error rate to measure the detection performance of W, which consists of the false alarm rate and the miss detection rate. Specifically, the false alarm rate is defined as the probability that W makes a decision in favor of $\mathcal {D} _1$ while $\mathcal {H} _0$ is true, which is denoted by $\mathcal{P}_{FA} \triangleq \mathcal {P} (\mathcal {D} _1 | \mathcal {H} _0)$; the miss detection rate is defined as the probability that W makes a decision in favor of $\mathcal {D} _0$ while $\mathcal {H} _1$ is true, and we denote it as $\mathcal{P}_{MD} \triangleq \mathcal {P } (\mathcal {D}_0 | \mathcal{H}_1)$. We assume that the priori probabilities of hypotheses $\mathcal {H} _0$ and $\mathcal {H} _1$ are equal.
Under this assumption, the detection error probability of W is defined as
\begin{equation}\label{detection error rate}
   \xi \triangleq \mathcal{P}_{FA}+\mathcal{P}_{MD}.
\end{equation}

In practice, it is impossible, or at least difficult, to know the exact value of the detection threshold $\mu$ set by W.
For the purpose of a robust design, we examine a worst-case scenario for covert communications where W can always use the optimal detection threshold to achieve a minimum detection error probability, denoted as ${\xi}^*$.
However, T cannot acquire the instantaneous CSI of the channel between itself and W $|h_{t,w}|^2$, it is impossible to calculate the exact detection error probability.
Instead, we employ the average detection error probability ${\bar{\xi}}$ to quantify the covertness performance of the system, which is averaged over $|h_{t,w}|^2$.
\subsection{Optimization Problem}
In this paper, we focus on the core metric termed covert throughput, which is defined as  \cite{Zheng2019Multi}
\begin{equation}\label{R c def}
\Omega\triangleq R\left(1-\delta\right),
\end{equation}
where $R$ denotes the transmission rate of messages adopted at T, and $\delta$ denotes the transmission outage probability {of the covert communication from T to R}, which will be detailed in Sec. IV-B, quantifying the likelihood that the reliability of the covert communication with rate $R$ cannot be guaranteed over fading channels.

The goal of this work is to achieve the maximal covet throughput under the premise that a certain level of covertness can be promised.
Specifically, we aim at maximizing the covert throughput $\Omega$ subject to a covertness constraint ${\bar{\xi}}^* \geq 1- \varepsilon$ by jointly designing the optimal transmission rate $R$ and the jammer selection threshold $\tau$.
Here, ${\bar{\xi}}^*$ denotes the minimum average detection error probability {at W} for the worst-case scenario of covert communications, and $\varepsilon \in [0,1]$ is a prescribed threshold corresponding to the maximal acceptable probability of being detected by W.
Mathematically, the problem of maximizing the covert throughput can be formulated as follows
\begin{equation}\label{R c max}
\max\limits_{ R,\tau}\Omega=R\left(1-\delta\right),\quad
\rm{s.t.}\quad {\bar{\xi}}^*\geq\rm{1}-\varepsilon.
\end{equation}

\section{Analysis of Covertness Performance}
In this section, we first derive the expression for the detection error probability as defined in \eqref{detection error rate} by calculating the false alarm rate and miss detection rate, respectively, considering a given value of $h_{t,w}$.
Afterwards, we consider a worst-case covert communication scenario where W can always adjust the detection threshold $\mu$ to fading channels and network geometry for improving its detection accuracy, and then design the optimal detection threshold to minimize the average detection error probability, which is averaged over $h_{t,w}$.
\subsection{Detection Error Probability}
Revisiting the detection strategy at W described in Sec. II-C, once given a detection threshold $\mu$, we {need to} derive the false alarm rate $\mathcal{P}_{FA}$ and miss detection rate $\mathcal{P}_{MD}$, respectively.
One of the major obstacles to the calculation of $\mathcal{P}_{FA}$ and $\mathcal{P}_{MD}$ lies in the term, i.e., the aggregate jamming power $\sigma_j^2$ in \eqref{T_e}.
First, the indicator $\mathcal{I}_{j_k}$ is closely related to the proposed opportunistic jammer selection scheme \eqref{characteristic I_k} such that it is a random variable strongly coupled with another random herein, i.e., the channel gain $|h_{j_k,w}|^2$.
Moreover, the distribution of the summation $\sigma_j^2$ is intractable to derive, since the terms therein are independent but non-identically distributed (i.n.i.d.) due to different distances $d_{j_k,w}$.

In order to overcome the above issues, we note that the combinations of selected jammers might vary in different time slots, and introduce the notation $\phi_m^s$ to represent the $s$-th subset among  $\binom{N}{m}$ subsets of different combinations of selected jammers with cardinality $m$ ($m=0,1,2,\ldots,N$), and the corresponding probability is denoted as $\mathcal{P}_{\phi_m^s}$.
For completeness, we denote $\phi_0^1$ as null set indicating no jammer is selected.
In the following lemma, we calculate the probability $\mathcal{P}_{\phi_m^s}$.
\begin{lemma}\label{jammer_set_pro}
	\textit{The probability that the set of the selected jammers is $\phi_m^s$ {is} given by}
	\begin{align}
	 \mathcal{P}_{\phi_m^s}
	 =\begin{cases} \prod\limits_{k=1}^N\left(1-p_{j_k}\right),&m=0\\ \prod\limits_{j_q\in\phi_m^s}p_{j_q}\prod\limits_{j_l\notin\phi_m^s}\left(1-p_{j_l}\right),&m=1,\cdots,N-1\\
	 \prod\limits_{k=1}^Np_{j_k},&m=N\\
	 \end{cases}
	\end{align}
	where $p_{j_k}$ denotes the probability that ${\rm J}_k$ is chosen as jammer, which has been derived in \eqref{p}.
\end{lemma}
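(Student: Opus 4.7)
The plan is to reduce the computation of $\mathcal{P}_{\phi_m^s}$ to an elementary product of independent Bernoulli probabilities, exploiting the i.i.d.\ structure of the fading coefficients $h_{j_k,r}$. First I would note that, by the selection rule \eqref{characteristic I_k}, each indicator $\mathcal{I}_{j_k}$ is a deterministic function of $|h_{j_k,r}|^2$ alone, with no dependence on the channel gains of any other cooperative node. Since the coefficients $\{h_{j_k,r}\}_{k=1}^N$ are i.i.d.\ $\mathcal{CN}(0,1)$, the random variables $|h_{j_k,r}|^2$ are i.i.d.\ exponential with unit mean, hence the indicators $\{\mathcal{I}_{j_k}\}_{k=1}^N$ form a collection of mutually independent Bernoulli random variables.

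Next I would identify $\mathbb{P}[\mathcal{I}_{j_k}=1]$ with the marginal selection probability $p_{j_k}$ already established in \eqref{p}, so that $\mathbb{P}[\mathcal{I}_{j_k}=0]=1-p_{j_k}$. The event that the realized jammer set is exactly $\phi_m^s$ is precisely the intersection
\begin{equation}
\bigcap_{j_q\in\phi_m^s}\{\mathcal{I}_{j_q}=1\}\;\cap\;\bigcap_{j_l\notin\phi_m^s}\{\mathcal{I}_{j_l}=0\},
\end{equation}
and by the independence established above, the probability of this intersection factorizes as the product of the marginals, which yields the middle case in the lemma statement directly. The boundary cases $m=0$ and $m=N$ correspond to all indicators being zero or all being one, and the claimed formulas follow by specializing the same product expression (with the convention that an empty product equals one).

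There is no substantive obstacle here: the only step requiring any care is to justify the mutual independence of the indicators $\mathcal{I}_{j_k}$, which I would record explicitly by pointing to the i.i.d.\ assumption on the small-scale fading coefficients stated in Section~II-A. Once independence is in hand, the result is a one-line factorization, and no distributional calculation beyond what already appears in \eqref{p} is required.
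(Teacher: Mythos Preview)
Your proposal is correct and follows essentially the same approach as the paper, which simply states that the result follows directly from the fact that all jammers are activated independently. Your version is a more explicit unpacking of that one-line observation, but the underlying idea---independence of the indicators $\mathcal{I}_{j_k}$ leading to a product of marginal Bernoulli probabilities---is identical.
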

\begin{proof}
	The result follows directly by noting the fact that all the jammers are activated independently.
\end{proof}

By introducing $\phi_m^s$, the aggregate jamming power $\sigma_j^2$ perceived at W for a certain time slot with $m$ jammers being activated can be rewritten as $\sum_{j_q\in\phi_m^s}P_j{\left|h_{j_q,w}\right|^2d_{j_q,w}^{-\alpha}}$.
Further, we define $X_{m,s}\triangleq\sum_{j_q\in\phi_m^s}{\left|h_{j_q,w}\right|^2d_{j_q,w}^{-\alpha}}$, and leverage a common gamma approximation method \cite{heath2013modeling} to derive its distribution by treating it as a gamma random variable.
\begin{lemma}\label{gamma_approx}
\textit{	By regarding $X_{m,s}\triangleq\sum_{j_q\in\phi_m^s}{\left|h_{j_q,w}\right|^2d_{j_q,w}^{-\alpha}}$ as a gamma random variable, its probability density function (PDF) can be {approximated} as
	\begin{equation}\label{pdf Xmi}
	f_{X_{m,s}}\left(x_{m,s};v_{m,s},\omega_{m,s}\right)\approx
	\frac{x_{m,s}^{v_{m,s}-1}e^{-\frac{x_{m,s}}{\omega_{m,s}}}}{\omega_{m,s}^{v_{m,s}}\Gamma\left(v_{m,s}\right)},
	\end{equation}
	where $v_{m,s}$ and $\omega_{m,s}$ are respectively defined as below
	\begin{align}
	v_{m,s}\triangleq \frac{\left(\sum_{j_q\in\phi_m^s} d_{j_q,w}^{-\alpha}\right)^2}{\sum_{j_q\in\phi_m^s} d_{j_q,w}^{-2\alpha}},\quad
	\omega_{m,s}\triangleq \frac{\sum_{j_q\in\phi_m^s} d_{j_q,w}^{-2\alpha}}{\sum_{j_q\in\phi_m^s} d_{j_q,w}^{-\alpha}}.
	\end{align}}
\end{lemma}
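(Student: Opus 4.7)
The plan is to establish the two parameters $v_{m,s}$ and $\omega_{m,s}$ via the classical second-order moment matching technique: approximate the distribution of $X_{m,s}$ by a gamma distribution whose first two moments coincide with those of $X_{m,s}$. This is the standard recipe behind the ``gamma approximation'' cited in \cite{heath2013modeling} and is the only natural way to collapse a sum of i.n.i.d. exponential-type terms into a single tractable gamma PDF.

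First I would compute the mean and variance of $X_{m,s}$ directly. Since $h_{j_q,w}\sim\mathcal{CN}(0,1)$, the squared magnitudes $|h_{j_q,w}|^2$ are i.i.d.\ unit-mean exponential random variables, so $\mathbb{E}[|h_{j_q,w}|^2]=1$ and $\mathrm{Var}(|h_{j_q,w}|^2)=1$. Because the path-loss factors $d_{j_q,w}^{-\alpha}$ are deterministic conditioned on the network geometry, linearity of expectation and independence across $j_q\in\phi_m^s$ yield
\begin{equation}
\mathbb{E}[X_{m,s}]=\sum_{j_q\in\phi_m^s}d_{j_q,w}^{-\alpha},\qquad
\mathrm{Var}(X_{m,s})=\sum_{j_q\in\phi_m^s}d_{j_q,w}^{-2\alpha}.
\end{equation}

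Next I would match these to the mean $v_{m,s}\omega_{m,s}$ and variance $v_{m,s}\omega_{m,s}^2$ of a gamma random variable with shape $v_{m,s}$ and scale $\omega_{m,s}$, i.e., solve the system
\begin{equation}
v_{m,s}\omega_{m,s}=\sum_{j_q\in\phi_m^s}d_{j_q,w}^{-\alpha},\qquad
v_{m,s}\omega_{m,s}^{2}=\sum_{j_q\in\phi_m^s}d_{j_q,w}^{-2\alpha}.
\end{equation}
Dividing the second equation by the first immediately gives the stated expression for $\omega_{m,s}$, after which $v_{m,s}$ follows from the first equation. Substituting $v_{m,s},\omega_{m,s}$ into the generic gamma PDF $x^{v-1}e^{-x/\omega}/[\omega^{v}\Gamma(v)]$ delivers \eqref{pdf Xmi}.

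The only delicate point — and the one I would flag rather than try to make rigorous — is the approximation symbol itself: $X_{m,s}$ is truly a weighted sum of independent exponentials (a generalized hyperexponential), not an exact gamma variate, so moment matching only guarantees agreement of the first two moments and is known empirically to be accurate in the bulk of the distribution but may be loose in the tails. I would therefore not claim equality in distribution; I would simply justify the choice of $v_{m,s}$ and $\omega_{m,s}$ as the unique pair making mean and variance coincide, and note that this approximation has been validated in prior literature on aggregate interference modeling \cite{heath2013modeling} and will be corroborated numerically in Section V, which is the expected standard of rigor for this step.
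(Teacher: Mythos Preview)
Your proposal is correct and follows exactly the paper's approach: matching the first two moments of $X_{m,s}$ to those of a gamma distribution with shape $v_{m,s}$ and scale $\omega_{m,s}$. The only cosmetic difference is that the paper's appendix derives $\mathbb{E}[X_{m,s}]$ and $\mathrm{Var}(X_{m,s})$ by differentiating the moment-generating function $\mathbb{E}[e^{\omega X_{m,s}}]=\prod_{j_k\in\phi_m^s}\mathbb{E}[e^{\omega|h_{j_k,w}|^2 d_{j_k,w}^{-\alpha}}]$, whereas you obtain the same quantities directly from the mean and variance of a unit-rate exponential; the resulting expressions for $v_{m,s}$ and $\omega_{m,s}$ are identical.
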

\begin{proof}
	The values of $v_{m,s}$ and $\omega_{m,s}$ can be obtained by matching the first and second moments of $X_{m,s}$, and a detailed proof is relegated to Appendix \ref{proof_lemma_2}.
\end{proof}

{The accuracy of the approximation \eqref{pdf Xmi} will be confirmed later by simulation results, and hence in the following analysis we will replace the exact PDF of the random variable $X_{m,s}$ with \eqref{pdf Xmi} for mathematical tractability.}
Resorting to Lemmas \ref{jammer_set_pro} and \ref{gamma_approx}, we can calculate the false alarm rate and miss detection rate in the following theorem.
\begin{theorem}\label{theorem_false and miss}
    \textit{The false alarm rate $\mathcal{P}_{FA}$ and the miss detection rate $ \mathcal{P}_{MD}$ at W are respectively given by
    \begin{align}\label{false alarm}
       \mathcal{P}_{FA} &=\begin{cases}
       1,&\mu \leq \sigma_w^2,\\
       \sum\limits_{m=1}^{N}\sum\limits_{s=1}^{\binom{N}{m}}\mathcal{P}_{\phi_m^s}
       \frac{\Gamma\left(v_{m,s},\frac{\mu-\sigma_w^2}{\omega_{m,s}P_j}\right)}{\Gamma\left(v_{m,s}\right)},&\mu>\sigma_w^2,\\
       \end{cases}\\
       \label{miss detection}
       \mathcal{P}_{MD} &=\begin{cases}
       0,&\mu \leq \rho_1,\\
       \mathcal{P}_{\phi_0^1}+\sum\limits_{m=1}^{N}\sum\limits_{s=1}^{\binom{N}{m}}\mathcal{P}_{\phi_m^s}\frac{\gamma\left(v_{m,s},\frac{\mu-\rho_1}{{\omega_{m,s}P}_j}\right)}{\Gamma\left(v_{m,s}\right)},&\mu>\rho_1,
       \end{cases}
    \end{align}
where $\mathcal{P}_{\phi_m^s}$ has been  provided by Lemma \ref{jammer_set_pro} and $\rho_1\triangleq \sigma_c^2+\sigma_w^2$}.
\end{theorem}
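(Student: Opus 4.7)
The plan is to compute $\mathcal{P}_{FA}$ and $\mathcal{P}_{MD}$ directly from their definitions by substituting the asymptotic ($n\to\infty$) test statistic \eqref{T_e} into the decision rule \eqref{T_e_mu}, and then averaging over the randomness of the jammer activation pattern and the jammer-to-warden channel gains. The key idea is to condition on which subset $\phi_m^s$ of helpers is actually selected as jammers, so that conditioned on this event the aggregate jamming power at W becomes $P_j X_{m,s}$ with $X_{m,s}=\sum_{j_q\in\phi_m^s}|h_{j_q,w}|^2 d_{j_q,w}^{-\alpha}$, whose distribution is supplied by Lemma \ref{gamma_approx}. The weights for the conditioning are given by Lemma \ref{jammer_set_pro}.

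For the false alarm rate, I would start from $\mathcal{P}_{FA}=\mathbb{P}(T_w>\mu\mid\mathcal{H}_0)=\mathbb{P}(\sigma_j^2>\mu-\sigma_w^2)$. If $\mu\le\sigma_w^2$ the right-hand side is trivially $1$ since $\sigma_j^2\ge 0$. Otherwise, applying the law of total probability over $\{\phi_m^s\}$, the $m=0$ term contributes zero (no jammer is active, so $\sigma_j^2=0<\mu-\sigma_w^2$), while for $m\ge 1$ I would invoke the gamma approximation of Lemma \ref{gamma_approx} to evaluate $\mathbb{P}(P_j X_{m,s}>\mu-\sigma_w^2)$ via the upper incomplete gamma function, yielding the tail expression $\Gamma(v_{m,s},(\mu-\sigma_w^2)/(\omega_{m,s}P_j))/\Gamma(v_{m,s})$. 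Weighting each by $\mathcal{P}_{\phi_m^s}$ from Lemma \ref{jammer_set_pro} produces the claimed formula \eqref{false alarm}.

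The miss detection rate follows the same strategy but with $\mathcal{H}_1$: $\mathcal{P}_{MD}=\mathbb{P}(T_w\le\mu\mid\mathcal{H}_1)=\mathbb{P}(\sigma_j^2\le\mu-\rho_1)$, where $\rho_1=\sigma_c^2+\sigma_w^2$. If $\mu\le\rho_1$ the probability is $0$ because $\sigma_j^2\ge 0$. Otherwise, the $m=0$ event automatically satisfies $\sigma_j^2=0\le\mu-\rho_1$ and thus contributes the full weight $\mathcal{P}_{\phi_0^1}$; for $m\ge 1$ the gamma CDF gives the lower incomplete gamma expression $\gamma(v_{m,s},(\mu-\rho_1)/(\omega_{m,s}P_j))/\Gamma(v_{m,s})$, and summing over $m,s$ with the weights from Lemma \ref{jammer_set_pro} gives \eqref{miss detection}.

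The routine calculations (gamma tails/CDFs) are immediate from Lemma \ref{gamma_approx}, so the only genuinely delicate points are bookkeeping ones: correctly isolating the boundary regimes $\mu\le\sigma_w^2$ and $\mu\le\rho_1$, and handling the $m=0$ term separately since the Gamma approximation degenerates there (the shape and scale parameters $v_{0,1},\omega_{0,1}$ are undefined because the sum in Lemma \ref{gamma_approx} is empty). I expect this careful case split, together with verifying that the conditioning events $\{\phi_m^s\}$ partition the sample space (so that their weighted sum is a valid application of total probability), to be the main place where care is needed; everything else is a direct substitution.
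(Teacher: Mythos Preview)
Your proposal is correct and follows essentially the same route as the paper: the paper likewise writes $\mathcal{P}_{FA}=\mathbb{P}[\sigma_j^2+\sigma_w^2>\mu]$ and $\mathcal{P}_{MD}=\mathbb{P}[\sigma_c^2+\sigma_j^2+\sigma_w^2<\mu]$, splits off the trivial regimes $\mu\le\sigma_w^2$ and $\mu\le\rho_1$, applies the total probability theorem over all jammer subsets $\phi_m^s$, and then evaluates the inner probabilities via the gamma approximation of Lemma~\ref{gamma_approx}. Your explicit treatment of the $m=0$ term (zero contribution to $\mathcal{P}_{FA}$, full weight $\mathcal{P}_{\phi_0^1}$ in $\mathcal{P}_{MD}$) is exactly the bookkeeping the paper also carries out implicitly when the sums start at $m=1$.
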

\begin{proof}
Recalling the detection strategy described in \eqref{T_e_mu}, the false alarm rate can be calculated from \eqref{T_e} as
\begin{align}\label{alpha proof}
&\mathcal{P}_{FA}=\mathbb{P}\left[\sigma_j^2+\sigma_w^2>\mu\right]\notag\\
  &\stackrel{\mathrm{(a)}}
  =\begin{cases}
  1,&\mu \leq \sigma_w^2,\\
  \sum\limits_{m=0}^{N}\sum\limits_{s=1}^{\binom{N}{m}}\mathcal{P}_{\phi_m^s}\mathbb{P}\left[\sum\limits_{j_q\in\phi_m^s}{\frac{\left|h_{j_q,w}\right|^2}{d_{j_q,w}^{\alpha}}}>\frac{\mu-\sigma_w^2}{P_j}\right],
  &\mu> \sigma_w^2.
  \end{cases}
\end{align}
where (a) follows from treating the combination of the selected jammers as a random variable and applying the total probability theorem by traversing all possible combinations.

Similarly, the miss detection rate is computed as
\begin{align}\label{beta proof}
&\mathcal{P}_{MD}=\left[\sigma_c^2+\sigma_j^2+\sigma_w^2<\mu\right]\notag\\
  &
  =\begin{cases}
  0,&\mu\leq\rho_1,\\
\sum\limits_{m=0}^{N}\sum\limits_{s=1}^{\binom{N}{m}}\mathcal{P}_{\phi_m^s}\mathbb{P}\left[\sum\limits_{j_q\in\phi_m^s}\frac{\left|h_{j_q,w}\right|^2}{d_{j_q,w}^{\alpha}}<\frac{\mu-\rho_1}{P_j}\right],
  &\mu>\rho_1.
  \end{cases}
\end{align}

The proof can be completed by calculating  the inner probabilities in \eqref{alpha proof} and \eqref{beta proof} resorting to Lemma \ref{gamma_approx}.
\end{proof}

Note that $\rho_1$ in Theorem \ref{theorem_false and miss} is the sum of received signal power and noise power at W.
Theorem \ref{theorem_false and miss} shows that the noise power $\sigma_w^2$ and the signal-plus-noise power $\rho_1$ are two important boundaries for the detection at W.
Specifically, a too small detection threshold, e.g., $\mu\leq\sigma_w^2$, will lead to a complete false alarm; where if W choose $\mu$ lower than $\rho_1$, a miss detection issue can be avoid.
For other situations, the detection accuracy at W heavily depends on the aggregate jamming power level.
In particular, if no jammer is activated when T is communicating to R, a miss detection event will occur if W set $\mu>\rho_1$. That is why there exists an additional term $\mathcal{P}_{\phi_0^1}$ for $ \mathcal{P}_{MD}$ in \eqref{miss detection} compared with $\mathcal{P}_{FA}$ in \eqref{false alarm}.

After obtaining the false alarm rate $\mathcal{P}_{FA}$ and miss detection rate $\mathcal{P}_{MD}$, the detection error probability can be calculated as \eqref{detection error rate}, which is provided by the following corollary.
\begin{corollary}\label{theorem_detection_error_pro}
\textit{The detection error probability $\xi$ at W for a prescribed  detection threshold $\mu$ is given by
	\begin{align}\label{detection error total 1}
	\xi=\begin{cases}
	1,&\mu\le\sigma_w^2,\\
	\sum\limits_{m=1}^{N}\sum\limits_{s=1}^{\binom{N}{m}}\mathcal{P}_{\phi_m^s}
	\frac{\Gamma\left(v_{m,s},\frac{\mu-\sigma_w^2}{\omega_{m,s}P_j}\right)}{\Gamma\left(v_{m,s}\right)},&\sigma_w^2<\mu\le\rho_1,\\
	\mathcal{P}_{\phi_0^1}+\sum\limits_{m=1}^{N}\sum\limits_{s=1}^{\binom{N}{m}}\mathcal{P}_{\phi_m^s}
	\frac{\Gamma\left(v_{m,s},\frac{\mu-\sigma_w^2}{\omega_{m,s}P_j}\right)}{\Gamma\left(v_{m,s}\right)}
	+\\
\quad\quad	\sum\limits_{m=1}^{N}\sum\limits_{s=1}^{\binom{N}{m}}\mathcal{P}_{\phi_m^s}
\frac{\gamma\left(v_{m,s},\frac{\mu-\rho_1}{{\omega_{m,s}P}_j}\right)}{\Gamma\left(v_{m,s}\right)},&\mu>\rho_1.
	\end{cases}
	\end{align}}
\end{corollary}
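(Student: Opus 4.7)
The plan is to derive the corollary as a direct consequence of Theorem 1 by substituting the piecewise expressions for $\mathcal{P}_{FA}$ and $\mathcal{P}_{MD}$ into the definition $\xi \triangleq \mathcal{P}_{FA}+\mathcal{P}_{MD}$ and then combining the two piecewise functions on a common partition of the real axis. Since $\mathcal{P}_{FA}$ switches behavior at $\mu=\sigma_w^2$ and $\mathcal{P}_{MD}$ switches behavior at $\mu=\rho_1$, and since $\sigma_w^2 \le \rho_1$ always holds (because $\rho_1=\sigma_c^2+\sigma_w^2$ and $\sigma_c^2\ge 0$), the natural partition of the $\mu$-axis has three pieces: $\mu\le\sigma_w^2$, $\sigma_w^2<\mu\le\rho_1$, and $\mu>\rho_1$. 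I would organize the proof around these three cases.

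First, in the region $\mu\le\sigma_w^2$, Theorem 1 gives $\mathcal{P}_{FA}=1$, and since $\mu\le\sigma_w^2\le\rho_1$, it also gives $\mathcal{P}_{MD}=0$. Adding these immediately yields $\xi=1$, matching the first branch of \eqref{detection error total 1}. Second, in the region $\sigma_w^2<\mu\le\rho_1$, $\mathcal{P}_{FA}$ takes its second-branch sum over $m$ and $s$, while $\mathcal{P}_{MD}=0$ still holds because $\mu\le\rho_1$. Thus $\xi$ equals precisely the $\mathcal{P}_{FA}$ expression, recovering the second branch of \eqref{detection error total 1}. Third, when $\mu>\rho_1$, both $\mathcal{P}_{FA}$ and $\mathcal{P}_{MD}$ take their nontrivial branches, and their sum gives the third branch, including the extra $\mathcal{P}_{\phi_0^1}$ term that originates from the $m=0$ contribution to $\mathcal{P}_{MD}$ (i.e., the miss-detection event arising when no jammer is activated).

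There is essentially no genuine obstacle here: once Theorem 1 is in place and one observes the ordering $\sigma_w^2\le\rho_1$, the result is a straightforward case split and summation. The only small bookkeeping point worth articulating carefully is why the extra term $\mathcal{P}_{\phi_0^1}$ appears only in the third branch, namely that the $m=0$ case contributes $0$ to $\mathcal{P}_{FA}$ (since the incomplete-gamma ratio vanishes whenever the combination is empty, so the sum in $\mathcal{P}_{FA}$ starts at $m=1$) but contributes the full probability $\mathcal{P}_{\phi_0^1}$ to $\mathcal{P}_{MD}$ as soon as $\mu>\rho_1$, because with no jammer the event $\{\sigma_c^2+\sigma_w^2<\mu\}$ happens with certainty. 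Emphasizing this asymmetry closes the argument and justifies why the three pieces in \eqref{detection error total 1} are precisely as stated.
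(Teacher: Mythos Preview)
Your proposal is correct and follows exactly the paper's approach: the corollary is obtained directly from Theorem~\ref{theorem_false and miss} by substituting the piecewise expressions for $\mathcal{P}_{FA}$ and $\mathcal{P}_{MD}$ into $\xi=\mathcal{P}_{FA}+\mathcal{P}_{MD}$ and partitioning the $\mu$-axis at $\sigma_w^2$ and $\rho_1$. Your remark on why the extra term $\mathcal{P}_{\phi_0^1}$ appears only when $\mu>\rho_1$ mirrors the paper's own commentary following the corollary.
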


Corollary \ref{theorem_detection_error_pro} shows that $\xi = 1$ when $\mu\le\sigma_w ^ 2$, which indicates a completely incorrect detection.
Some observations can be further made on how the detection performance at W is impacted by the selection threshold $\tau$ of the proposed jammer selection scheme and the jamming power $P_j$.

1) As $\tau\rightarrow\infty$, the detection error probability $\xi$ in \eqref{detection error total 1} can be simplified as
	\begin{align}\label{detection error infty}
\lim\limits_{\tau\rightarrow\infty}{\xi}=\begin{cases}
	1,&\mu\le\sigma_w^2,\\
	\frac{\Gamma\left(v_N,\frac{\mu-\sigma_w^2}{\omega_NP_j}\right)}{\Gamma\left(v_N\right)},&\sigma_w^2<\mu\le\rho_1,\\
	1-\frac{\Gamma\left(v_N,\frac{\mu-\rho_1}{\omega_NP_j}\right)-\Gamma\left(v_N,\frac{\mu-\sigma_w^2}{{\omega_NP}_j}\right)}{\Gamma\left(v_N\right)},&\mu>\rho_1.
	\end{cases}
	\end{align}
	where
	\begin{equation}\label{vN wN}
	v_N=\frac{\left(\sum_{k=1}^{N}d_{j_k,w}^{-\alpha}\right)^2}{\sum_{k=1}^{N}d_{j_k,w}^{-2\alpha}},\quad
	\omega_N=\frac{\sum_{k=1}^{N}d_{j_k,w}^{-2\alpha}}{\sum_{k=1}^{N}d_{j_k,w}^{-\alpha}}.\notag
	\end{equation}

We notice that $\tau\rightarrow\infty$ corresponds to the case where all the cooperative nodes are chosen to send jamming signals to facilitate the covert communication such that the uncertainty imposed at W by jammers vanishes.
Then, we can see from \eqref{detection error infty} that $\lim_{\tau\rightarrow\infty}{\xi}$ is solely determined on the detection threshold $\mu$ at W for a given value of $h_{t,w}$, but is independent of the selection threshold $\tau$.

2) As $P_j\rightarrow\infty$, it is easy to find that $\frac{\mu-\sigma_w^2}{\omega_{m,s}P_j}\rightarrow0$ and $\frac{\mu-\rho_1}{\omega_{m,s}P_j}\rightarrow0$.
Accordingly, we can obtain $\lim_{P_j\rightarrow\infty}\xi\rightarrow1-\mathcal{P}_{\phi_0^1}$ when $\sigma_w^2<\mu\le\rho_1$ and $\lim_{P_j\rightarrow\infty}\xi\rightarrow1$ when $\mu>\rho_1$.
This implies that the minimum detection error probability that W can achieve approaches $\xi^*=1-\mathcal{P}_{\phi_0^1}$ when a large enough jamming power is employed against surveillance.
Moreover, it is as expected that $\xi\rightarrow 1$ as $\tau,P_j\rightarrow\infty$, which can also be verified from \eqref{detection error infty}.

3) As $\tau, P_j\rightarrow 0$, which is equivalent to the case without the assistance of jammers, we can readily derive that $\xi\rightarrow 0$ when $\sigma_w^2<\mu\le\rho_1$ and $\xi\rightarrow 1$ otherwise.
This implies, if only W chooses the value of $\mu$ within $(\sigma_w^2,\rho_1]$, it can successfully detect the covert communication between T and R.

The above analyses clearly demonstrate the superiority of our proposed threshold-based jammer selection scheme.

\subsection{Minimum Average Detection Error Probability}
Since T cannot acquire the instantaneous CSI related to W $h_{t,w}$, it is impossible to calculate the exact detection error probability at W.
Instead, we focus on the average detection error probability over $h_{t,w}$.\footnote{{\color{black}As discussed in \cite{He2017On,He2018Covert}, the average detection error probability captures the average covertness performance from a Bayesian statistics perspective. There exists another approach for evaluating the covertness performance, which captures the probability that covert communication fails from an outage perspective, e.g., covert outage probability. Following \cite{He2017On,He2018Covert}, we can easily prove that the two kinds of metrics are equivalent in assessing the covertness performance of the system under investigation. }}
We consider a worst-case scenario for covert communications where the optimal detection threshold $\mu$ is designed from W's viewpoint which results in a minimum average detection error probability.
In particular, the worst-case covert communication should  consider that, for improving detection accuracy, W can adjust the detection threshold based on not only the distances between himself to T and jammers, but also the CSI of $h_{t,w}$.
In the following, we distinguish two situations depending on whether W has the knowledge of the instantaneous CSI $h_{t,w}$ or not.

\emph{Case 1: W knows the instantaneous CSI of $h_{t,w}$.}
In this situation, W is able to adjust the detection threshold $\mu$ to the change of the detection channel $h_{t,w}$ in real-time so as to achieve a minimum detection error probability.
Revisiting $\xi$ in \eqref{detection error total 1}, the minimum detection error probability under a given value of $h_{t,w}$ can be derived by $\xi^*\left(h_{t,w}\right)=\min_{\mu>0}\xi$, which turns out to be a function of a single random variable $h_{t,w}$.
Specifically, when $\sigma_w ^ 2 < \mu \le \rho_1$, we can easily prove that $\xi$ monotonically decreases from $\xi=1$ as $\mu$ increases, since $\Gamma\left(v_{m,s},\frac{\mu-\sigma_w^2}{\omega_{m,s}P_j}\right)$ decreases with $\mu$.
We can also confirm that $\xi\rightarrow1$ when $\mu\rightarrow\infty$.
This indicates that, the minimum detection error probability can be achieved at either $\mu=\rho_1$ or within the interval $\mu\in(\rho_1,\infty)$.

Based on the above discussion, the average minimum detection error probability can be calculated as
\begin{equation}\label{expected min detection}
{\bar{\xi}}_1^*=\mathbb{E}_{h_{t,w}}\left[\min_{\mu\in[\rho_1,\infty)}\xi(h_{t,w})\right].
\end{equation}

\emph{Case 2: W does not know the instantaneous CSI of $h_{t,w}$.}
In this situation, W can only design the optimal $\mu$ to minimize the average detection error probability based on the statistics of $h_{t,w}$.
Before proceeding to the calculation of the minimum average detection error probability, we first derive the expression for the average detection error probability based on \eqref{detection error total 1} by the following theorem.
\begin{theorem}\label{expected detection error_theorem}
\textit{The average detection error probability at W for a given detection threshold $\mu$ is given by
\begin{align}\label{expected detection error}
\bar{\xi}=\begin{cases}
1,&\mu\le\sigma_w^2,\\
\sum\limits_{m=1}^{N}\sum\limits_{s=1}^{\binom{N}{m}}\mathcal{P}_{\phi_m^s}\left(\frac{\Gamma\left(v_{m,s},\frac{\mu-\sigma_w^2}{\omega_{m,s}P_j}\right)}{\Gamma\left(v_{m,s}\right)}-\int_{0}^{\rho_2}e^{-x}\times\right.\\
\quad\left.\frac{\Gamma\left(v_{m,s},\frac{\mu-\rho_1(x)}{{\omega_{m,s}P}_j}\right)}{\Gamma\left(v_{m,s}\right)}dx\right)+1-e^{-\rho_2},&\mu>\sigma_w^2,
\end{cases}
\end{align}
where $\rho_1(x)\triangleq P_td_{t,w}^{-\alpha}x+\sigma_w^2$ and $\rho_2\triangleq {\left(\mu-\sigma_w^2\right)d_{t,w}^\alpha}/{P_t}$.}
\end{theorem}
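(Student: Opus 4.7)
The plan is to take the expectation of Corollary~\ref{theorem_detection_error_pro} directly with respect to $|h_{t,w}|^2$. Since $|h_{t,w}|^2 \sim \mathrm{Exp}(1)$ with density $e^{-x}$ on $x\ge 0$, the average detection error probability is simply $\bar{\xi}=\int_{0}^{\infty}\xi(x)\,e^{-x}\,dx$, where $\xi(x)$ denotes the conditional expression in \eqref{detection error total 1} viewed as a function of $x=|h_{t,w}|^2$. The piecewise structure of $\xi(x)$ is driven by the comparison between $\mu$ and $\rho_1=P_t d_{t,w}^{-\alpha}x+\sigma_w^2$, so the first step is to convert this into a condition on $x$: for $\mu>\sigma_w^2$, one has $\mu>\rho_1(x)$ iff $x<\rho_2\triangleq(\mu-\sigma_w^2)d_{t,w}^{\alpha}/P_t$, and $\mu\le\rho_1(x)$ iff $x\ge\rho_2$. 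The trivial case $\mu\le\sigma_w^2$ gives $\xi\equiv 1$ and hence $\bar\xi=1$ immediately.

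For $\mu>\sigma_w^2$, I would split the integral as $\int_0^{\rho_2}+\int_{\rho_2}^{\infty}$. On $[\rho_2,\infty)$ the conditional $\xi$ equals the second branch of \eqref{detection error total 1}, which is \emph{independent} of $x$; this integrates to $\xi_2 \, e^{-\rho_2}$, where $\xi_2\triangleq\sum_{m,s}\mathcal{P}_{\phi_m^s}\Gamma(v_{m,s},(\mu-\sigma_w^2)/(\omega_{m,s}P_j))/\Gamma(v_{m,s})$. On $[0,\rho_2)$ the conditional $\xi$ equals the third branch, whose $x$-independent part contributes $(\mathcal{P}_{\phi_0^1}+\xi_2)(1-e^{-\rho_2})$ after multiplying by $e^{-x}$ and integrating, while the $x$-dependent lower-incomplete-gamma piece becomes $\sum_{m,s}\mathcal{P}_{\phi_m^s}\int_0^{\rho_2}e^{-x}\gamma(v_{m,s},(\mu-\rho_1(x))/(\omega_{m,s}P_j))/\Gamma(v_{m,s})\,dx$.

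The crucial simplification is to apply the identity $\gamma(a,y)=\Gamma(a)-\Gamma(a,y)$ to that last integrand. This converts the lower-incomplete-gamma term into $1-\Gamma(v_{m,s},\cdot)/\Gamma(v_{m,s})$, and after using $\sum_{m=1}^{N}\sum_{s}\mathcal{P}_{\phi_m^s}=1-\mathcal{P}_{\phi_0^1}$, the constant contributions combine with $(\mathcal{P}_{\phi_0^1}+\xi_2)(1-e^{-\rho_2})+\xi_2 e^{-\rho_2}$ to collapse cleanly into $\xi_2+(1-e^{-\rho_2})$ minus the upper-incomplete-gamma integral. Collecting terms yields exactly \eqref{expected detection error}.

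The bookkeeping across the three pieces of Corollary~\ref{theorem_detection_error_pro} is the only real obstacle — in particular, keeping the $\mathcal{P}_{\phi_0^1}$ terms consistent when switching from $\gamma$ to $\Gamma$ and verifying that the $m=0$ contribution drops out in the final double sum. I do not anticipate any analytical difficulty beyond this, since the inner integral over $x$ is left in the stated form rather than evaluated in closed form, and no further approximation beyond Lemma~\ref{gamma_approx} (already used inside $\xi$) is required.
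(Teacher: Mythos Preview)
Your proposal is correct and follows essentially the same route as the paper: average the conditional detection error over $|h_{t,w}|^2\sim\mathrm{Exp}(1)$, split according to whether $x<\rho_2$ or $x\ge\rho_2$, apply $\gamma(a,y)=\Gamma(a)-\Gamma(a,y)$, and use $\sum_{m\ge1,s}\mathcal{P}_{\phi_m^s}=1-\mathcal{P}_{\phi_0^1}$ to collapse the constants. The paper shortens the bookkeeping slightly by writing $\bar\xi=\mathcal{P}_{FA}+\mathbb{E}_{h_{t,w}}[\mathcal{P}_{MD}]$ up front and noting that $\mathcal{P}_{FA}$ is independent of $h_{t,w}$, which spares you the recombination $\xi_2 e^{-\rho_2}+\xi_2(1-e^{-\rho_2})=\xi_2$, but the substance is identical.
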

\begin{proof}
We see from \eqref{miss detection} that the miss detection event only occurs when $\mu>\rho_1$, i.e., $\left|h_{t,w}\right|^2<\rho_2$.
Let $x\triangleq \left|h_{t,w}\right|^2$, then the average detection error probability for a given $\mu$ can be calculated according to \eqref{false alarm}, \eqref{miss detection}, and \eqref{detection error total 1} as
\begin{align}\label{expected detection error 2}
\bar{\xi}&=\mathbb{E}_{h_{t,w}}\left[\xi(h_{t,w})\right]=\mathcal{P}_{FA}+\mathbb{E}_{h_{t,w}}\left[\mathcal{P}_{MD}\right]\notag\\
&=\begin{cases}
1,&\mu\le\sigma_w^2,\\
\mathcal{P}_{FA}+\int_{0}^{\rho_2}{\mathcal{P}_{MD}f_{|h_{t,w}|^2}\left(x\right)dx},&\mu>\sigma_w^2,
\end{cases}
\end{align}
where $f_{|h_{t,w}|^2}\left(x\right)=e^{-x}$ is the PDF of $|h_{t,w}|^2$.
{Substituting} \eqref{false alarm} and \eqref{miss detection} into \eqref{expected detection error 2}, and the proof can be completed after some algebraic operations.
\end{proof}

{\color{black}Note that $\bar{\xi}$ in \eqref{expected detection error 2} for $\mu>\sigma_w^2$ contains two parts, where the first part reflects the negative impact of the opportunistic jammer selection on the detection accuracy at W, and the second one arises due to the absence of the instantaneous CSI of $h_{t,w}$.
	
	In the following corollary, we provide more concise expressions for $\bar{\xi}$ in \eqref{expected detection error} under the large and small selection threshold regimes, i.e., considering $\tau\rightarrow\infty$ and $\tau\rightarrow 0$, respectively.
	
	\begin{corollary}\label{corollary_error_inf}
		\textit{As $\tau\rightarrow\infty$ and $\tau\rightarrow 0$, the average detection error probability $\bar{\xi}$ for $\mu>\sigma_w^2$ approaches
			\begin{equation}\label{expected detection error infty}
			\lim\limits_{\tau\rightarrow\infty}\bar{\xi}=
			\frac{\Gamma\left(v_N,\frac{\mu-\sigma_w^2}{\omega_NP_j}\right)}{\Gamma\left(v_N\right)}+\int_{0}^{\rho_2}{e^{-x}\frac{\gamma\left(v_N,\frac{\mu-\rho_1(x)}{{\omega_NP}_j}\right)}
				{\Gamma\left(v_N\right)}dx},
			\end{equation}
			\begin{equation}
			\lim\limits_{\tau\rightarrow 0}\bar{\xi}=1-e^{-\rho_2},
			\end{equation}
			respectively, where $v_N$ and $\omega_N$ have been defined in \eqref{detection error infty}. }
	\end{corollary}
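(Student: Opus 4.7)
The plan is to take the two limits directly inside the expression for $\bar{\xi}$ in Theorem \ref{expected detection error_theorem} and exploit how the combinatorial weights $\mathcal{P}_{\phi_m^s}$ from Lemma \ref{jammer_set_pro} degenerate in each regime. From \eqref{p}, $p_{j_k}=1-e^{-d_{j_k,r}^{\alpha}\sigma_r^2\tau}$, so $p_{j_k}\to 1$ as $\tau\to\infty$ and $p_{j_k}\to 0$ as $\tau\to 0$. Feeding this into Lemma \ref{jammer_set_pro}, exactly one term survives the double sum in either limit: every $\mathcal{P}_{\phi_m^s}$ with $1\le m\le N-1$ is killed by the mixed product $\prod_{j_q\in\phi_m^s}p_{j_q}\prod_{j_l\notin\phi_m^s}(1-p_{j_l})$, because one of the two kinds of factors vanishes; in the $\tau\to\infty$ case only the full-set weight $\mathcal{P}_{\phi_N^1}$ approaches $1$, and in the $\tau\to 0$ case only $\mathcal{P}_{\phi_0^1}$ does.

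The $\tau\to 0$ claim is then immediate: the double sum in \eqref{expected detection error} starts at $m=1$, so every surviving weight is zero in the limit and only the residual $1-e^{-\rho_2}$ remains. For $\tau\to\infty$, the surviving $(m,s)=(N,1)$ term has $v_{m,s}=v_N$ and $\omega_{m,s}=\omega_N$ as defined in the statement of \eqref{vN wN}, so \eqref{expected detection error} collapses to
\begin{equation*}
\frac{\Gamma\bigl(v_N,\tfrac{\mu-\sigma_w^2}{\omega_N P_j}\bigr)}{\Gamma(v_N)}
-\int_0^{\rho_2}e^{-x}\frac{\Gamma\bigl(v_N,\tfrac{\mu-\rho_1(x)}{\omega_N P_j}\bigr)}{\Gamma(v_N)}\,dx
+1-e^{-\rho_2}.
\end{equation*}
The remaining step is cosmetic: apply the identity $\Gamma(a,y)=\Gamma(a)-\gamma(a,y)$ inside the integrand. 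The $\Gamma(a)/\Gamma(a)=1$ piece integrates to $1-e^{-\rho_2}$, which cancels the trailing $1-e^{-\rho_2}$, leaving the lower-incomplete-gamma form in \eqref{expected detection error infty}.

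The only technical point worth spelling out is the interchange of the limit $\tau\to\infty$ (or $\tau\to 0$) with the finite sum and with the integral over $[0,\rho_2]$. The sum is finite, so termwise limits are trivially valid. For the integral, the integrand is continuous in $x$ and bounded uniformly in $\tau$ on the compact interval (since $\Gamma(v_{m,s},\cdot)/\Gamma(v_{m,s})\in[0,1]$), so dominated convergence, or simply uniform convergence of the finite weighted sum to its limit, justifies the exchange. I do not expect any real obstacle here; the argument is essentially a bookkeeping step combined with the incomplete-gamma identity.
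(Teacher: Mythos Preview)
Your argument is correct and is exactly the approach the paper intends: the paper states Corollary~\ref{corollary_error_inf} without proof, but the analogous Corollary~\ref{remark out 0 inf} is justified by precisely the same mechanism you use, namely substituting $p_{j_k}\to 1$ (resp.\ $p_{j_k}\to 0$) into the combinatorial weights so that only the $m=N$ (resp.\ $m=0$) term survives. Your extra care with the incomplete-gamma identity $\Gamma(a,y)=\Gamma(a)-\gamma(a,y)$ to reconcile the form of \eqref{expected detection error} with \eqref{expected detection error infty}, and the remark on dominated convergence for the $x$-integral, are details the paper leaves implicit.
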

	
	Corollary \ref{corollary_error_inf} shows that as either $\tau\rightarrow\infty$ or $\tau\rightarrow 0$, $\bar{\xi}$ converges to a constant value independent of $\tau$, which implies that the uncertainty of the number of selected jammers vanishes at this time.
	
Recalling \eqref{expected detection error}, just similar to the analysis with respect to \eqref{detection error total 1}, we can prove that the optimal detection threshold $\mu$ that minimizes the average detection error probability $\bar{\xi}$ in \eqref{expected detection error} exists within the interval $\mu\in[\sigma_w^2,\infty)$.
	
	According to the above analysis, the minimum average detection error probability can be derived by
	\begin{equation}\label{expected min detection_2}
	{\bar{\xi}}_2^*=\min_{\mu\in[\sigma_w^2,\infty)}\mathbb{E}_{h_{t,w}}\left[\xi(h_{t,w})\right].
	\end{equation}
	
Combined with Corollary \ref{corollary_error_inf}, some observations can be made on the asymptotic behavior of ${\bar{\xi}}_2^*$. To be specific, we can derive that $\lim_{\tau\rightarrow\infty}{\bar{\xi}}_2^*\rightarrow 1$ as $P_j\rightarrow\infty$ and $\lim_{\tau\rightarrow\infty}{\bar{\xi}}_2^*\rightarrow 1-e^{-\rho_2}$ as $P_j\rightarrow 0$.
	Besides, we have $\lim_{\tau\rightarrow 0}{\bar{\xi}}_2^*\rightarrow 0$, which is achieved when W sets $\mu=\sigma_w^2$.
	Note that although adopting sufficiently large values of $\tau$ and $P_j$ can result in a quite poor detection performance, it will bring severe interference to R to degrade the reliability.
	These results reflect the necessary of choosing a proper value of $\tau$ in order to balance well between covertness and reliability.}

\section{Covert Throughput Maximization}
In this section, we first calculate the transmission outage probability $\delta$ of the covert communication from T to R, based on which we then design the optimal selection threshold $\tau$ and transmission rate $R$ to maximize the covert throughput subject to a covertness constraint.

\subsection{Transmission Outage Probability}
According to \eqref{yr}, the signal-to-interference-plus-noise ratio (SINR) at R is given by
\begin{equation}\label{yr SINR}
\Lambda_r=\frac{P_t\left|h_{t,r}\right|^2}{\sum_{k=1}^{N}{\mathcal{I}_{j_k}P_j\left|h_{j_k,r}\right|^2+\sigma_r^2}}.
\end{equation}

Since $\left|h_{j_k,r}\right|^2$ is a random variable to T, a transmission outage event occurs when the channel capacity  $C={\rm{log}}_2{\left(1+\Lambda_r\right)}$ falls below the transmission rate $R$, i.e.,   $C<R$.
The corresponding probability that this event happens is provided by the following lemma.
\begin{lemma}\label{P outage lemma}
\textit{The transmission outage probability from T to R is given by
\begin{equation}\label{P outage eq}
\begin{split}
\delta=&\left(1-\lambda\right)\mathcal{P}_{\phi_0^1}+\sum_{m=1}^{N}\sum_{s=1}^{\binom{N}{m}}\mathcal{P}_{\phi_m^s}\times\\
&\quad \left(1-\prod_{j_k\in\phi_m^s}\frac{\lambda\left(1-e^{-\left(\varphi P_jd_{t,r}^\alpha+d_{j_k,r}^\alpha\right)\sigma_r^2\tau}\right)}{\left(\varphi P_jd_{j_k,r}^{-\alpha}d_{t,r}^\alpha+1\right)\left(1-e^{-d_{j_k,r}^\alpha\sigma_r^2\tau}\right)}\right),
\end{split}
\end{equation}
where
$\varphi\triangleq{\left(2^R-1\right)}/{P_t}$ and
$\lambda\triangleq e^{-d_{t,r}^\alpha\varphi\sigma_r^2}$.}
\end{lemma}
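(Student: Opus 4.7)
The plan is to derive $\delta$ by conditioning on which subset of cooperative nodes is activated, using the already established probabilities $\mathcal{P}_{\phi_m^s}$ from Lemma~1, and then exploiting the independence structure that survives after conditioning on the selection event.

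First I would translate the outage event $C<R$ into $|h_{t,r}|^{2} < \varphi\, d_{t,r}^{\alpha}\!\left(\sigma_r^{2} + P_j\sum_{k}\mathcal{I}_{j_k}|h_{j_k,r}|^{2} d_{j_k,r}^{-\alpha}\right)$, with $\varphi=(2^{R}-1)/P_t$. Applying the law of total probability over the $2^{N}$ possible selection subsets $\phi_m^{s}$, the outage probability decomposes as
\begin{equation*}
\delta \;=\; \sum_{m=0}^{N}\sum_{s=1}^{\binom{N}{m}}\mathcal{P}_{\phi_m^s}\,
\Pr\!\big[\text{outage}\mid \phi_m^{s}\text{ selected}\big].
\end{equation*}
The $m=0$ term is elementary: no jammer is active, so the outage probability collapses to $\Pr[|h_{t,r}|^{2}<\varphi\,d_{t,r}^{\alpha}\sigma_r^{2}] = 1-\lambda$, producing the first summand $(1-\lambda)\mathcal{P}_{\phi_0^1}$.

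For each nonempty $\phi_m^{s}$, I would exploit two facts: (i) $|h_{t,r}|^{2}$ is independent of every $|h_{j_k,r}|^{2}$, and (ii) conditioned on the selection event, the random variables $\{|h_{j_k,r}|^{2}\}_{j_k\in\phi_m^s}$ remain mutually independent because the selection criterion in~\eqref{characteristic I_k} acts componentwise on each $h_{j_k,r}$. Taking the expectation over $|h_{t,r}|^{2}\sim\mathrm{Exp}(1)$ first yields
\begin{equation*}
\Pr[\text{out}\mid \phi_m^{s}] \;=\; 1 - \lambda\prod_{j_k\in\phi_m^s}
\mathbb{E}\!\left[e^{-\varphi P_j d_{t,r}^{\alpha}d_{j_k,r}^{-\alpha}|h_{j_k,r}|^{2}}\,\Big|\,j_k\text{ selected}\right].
\end{equation*}
The key ingredient is then the conditional distribution of $|h_{j_k,r}|^{2}$ given $\mathcal{I}_{j_k}=1$: it is a truncated exponential on $[0,\,d_{j_k,r}^{\alpha}\sigma_r^{2}\tau]$ with density $e^{-y}/(1-e^{-d_{j_k,r}^{\alpha}\sigma_r^{2}\tau})$, where the normalizer is exactly $p_{j_k}$ from \eqref{p}. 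Plugging this into the conditional Laplace transform and integrating $\int_0^{d_{j_k,r}^{\alpha}\sigma_r^{2}\tau} e^{-(s_k+1)y}\,dy$ with $s_k=\varphi P_j d_{t,r}^{\alpha}d_{j_k,r}^{-\alpha}$ gives, after regrouping $(s_k+1)d_{j_k,r}^{\alpha}=\varphi P_j d_{t,r}^{\alpha}+d_{j_k,r}^{\alpha}$, precisely the factor appearing inside the product in \eqref{P outage eq}.

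I expect the main subtlety, rather than the algebra, to be the careful bookkeeping of the conditional distribution: one must argue that restricting to the event $\{\mathcal{I}_{j_k}=1 \text{ for all } j_k\in\phi_m^{s},\ \mathcal{I}_{j_l}=0\text{ for all }j_l\notin\phi_m^{s}\}$ truncates the channel gains of the \emph{active} jammers to $[0,d_{j_k,r}^{\alpha}\sigma_r^{2}\tau]$ while leaving them independent, and that the event probabilities multiplying these conditional expectations are exactly the $\mathcal{P}_{\phi_m^{s}}$ computed in Lemma~\ref{jammer_set_pro}. Once this conditioning is justified, the product structure of \eqref{P outage eq} emerges directly from the product of per-jammer truncated MGFs, and collecting the $m=0$ contribution with the sum over $m\geq 1$ closes the proof.
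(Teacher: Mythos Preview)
Your proposal is correct and follows essentially the same route as the paper's own proof: condition on the selected subset via total probability (Lemma~\ref{jammer_set_pro}), average over the exponential law of $|h_{t,r}|^{2}$ to extract the factor $\lambda$ and a product of conditional Laplace transforms, and then evaluate each factor using the truncated exponential density of $|h_{j_k,r}|^{2}$ given $\mathcal{I}_{j_k}=1$. The only point you make more explicit than the paper is the preservation of mutual independence of the jammer gains after conditioning on the selection event, which the paper uses tacitly; otherwise the arguments coincide step for step. (Note that both your derivation and the paper's proof naturally place $\lambda$ \emph{outside} the product, whereas the displayed statement has it inside; this is a typographical slip in the statement, not a flaw in your argument.)
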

\begin{proof}
Based on the definition of a transmission outage event, the transmission outage probability can be calculated as
\begin{align}\label{P outage proof}
&\delta=\mathbb{P}\left[{\rm{log}}_2{\left(1+\Lambda_r\right)}<R\right]\notag\\
&  \stackrel{\mathrm{(a)}}
=\sum_{m=0}^{N}\sum_{s=1}^{\binom{N}{m}}\mathcal{P}_{\phi_m^s}\mathbb{P}\left[\frac{\left|h_{t,r}\right|^2}{d_{t,r}^{\alpha}}<\varphi
\left(\sum_{j_k\in\phi_m^s}\frac{\left|h_{j_k,r}\right|^2}{d_{j_k,r}^{\alpha}}+\sigma_r^2\right)\right]\notag\\
&\stackrel{\mathrm{(b)}}
=\left(1-\lambda\right)\mathcal{P}_{\phi_0^1}
+\sum_{m=1}^{N}\sum_{s=1}^{\binom{N}{m}}\mathcal{P}_{\phi_m^s}\times\notag\\
&\qquad\left(1-\lambda\prod_{j_k\in\phi_m^s}{\mathbb{E}_{\left|h_{j_k,r}\right|^2}\left[e^{-\varphi P_j\left|h_{j_k,r}\right|^2d_{j_k,r}^{-\alpha}d_{t,r}^\alpha}\right]}\right)\notag\\
&=\left(1-\lambda\right)\mathcal{P}_{\phi_0^1}
+\sum_{m=1}^{N}\sum_{s=1}^{\binom{N}{m}}\mathcal{P}_{\phi_m^s}
\Bigg(1-\lambda\prod_{j_k\in\phi_m^s}\times\notag\\
&\qquad\int_{0}^{{\sigma_r^2\tau}{d_{j_k,r}^{\alpha}}}
{\check{f}}_{\left|h_{j_k,r}\right|^2}\left(y\right)e^{-\varphi P_j\left|h_{j_k,r}\right|^2d_{j_k,r}^{-\alpha}d_{t,r}^\alpha}dy\Bigg),
\end{align}
where (a) follows from the total probability theorem and (b) holds by using the CDF of ${|h_{t,r}|}^2$, i.e., $F_{\left|h_{t,r}\right|^2}\left(x\right)=1-e^{-x}$. In addition, since the selected jammers satisfy $\left|h_{j_k,r}\right|^2d_{j_k,r}^{-\alpha}/\sigma_r^2<\tau$, the PDF of $\left|h_{j_k,r}\right|^2$ can be derived as a conditional PDF given below,
\begin{align}\label{pdf h_{jk,r}}
{\check{f}}_{\left|h_{j_k,r}\right|^2}\left(y\right)=\begin{cases}
\frac{e^{-y}}{1-e^{-d_{j_k,r}^\alpha\sigma_r^2\tau}},&0\le y\le d_{j_k,r}^\alpha\sigma_r^2\tau,\\
0,&otherwise.
\end{cases}
\end{align}
Substituting \eqref{pdf h_{jk,r}} into \eqref{P outage proof} and solving the integral complete the proof.
\end{proof}

The first part in \eqref{P outage eq} is actually the transmission outage probability without jammers and the second part reflects the increased risk of causing transmission outage due to the introduction of jammers.
The following corollary further provides the transmission outage probability for the asymptotic cases with $\tau\rightarrow0$ and $\tau\rightarrow\infty$, respectively.
\begin{corollary}\label{remark out 0 inf}
\it{As the selection threshold $\tau\rightarrow0$ and $\tau\rightarrow\infty$, the transmission outage probability $\delta$ respectively approaches
\begin{equation}\label{outage 0}
\lim\limits_{\tau\rightarrow0}\delta=1-\lambda,
\end{equation}
\begin{equation}\label{outage infty}
\lim\limits_{\tau\rightarrow\infty}\delta=1-\lambda\prod_{k=1}^{N}\frac{1}{\varphi P_jd_{j_k,r}^{-\alpha}d_{t,r}^\alpha+1}.
\end{equation}}
\end{corollary}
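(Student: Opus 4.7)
The plan is to obtain both limits by direct substitution into the closed-form expression for $\delta$ in Lemma~\ref{P outage lemma}, exploiting the fact that the subset-selection distribution $\{\mathcal{P}_{\phi_m^s}\}$ concentrates on a single atom in each asymptotic regime. The first step is to evaluate the limits of the individual selection probability $p_{j_k}=1-e^{-d_{j_k,r}^\alpha\sigma_r^2\tau}$, namely $p_{j_k}\to 0$ as $\tau\to 0$ and $p_{j_k}\to 1$ as $\tau\to\infty$. Plugging these into Lemma~\ref{jammer_set_pro} immediately gives $\mathcal{P}_{\phi_0^1}\to 1$ with $\mathcal{P}_{\phi_m^s}\to 0$ for every $m\geq 1$ in the first regime, and $\mathcal{P}_{\phi_N^1}\to 1$ with $\mathcal{P}_{\phi_m^s}\to 0$ for every $m<N$ in the second.

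For $\tau\to 0$, I would note that the bracketed factor in each summand of $\delta$ lies in $[0,1]$ while the prefactor $\mathcal{P}_{\phi_m^s}$ vanishes for every $m\geq 1$, so the entire double sum disappears and only $(1-\lambda)\mathcal{P}_{\phi_0^1}\to 1-\lambda$ remains, yielding \eqref{outage 0}. For $\tau\to\infty$, the same boundedness argument eliminates the first term together with every summand indexed by $m<N$; only the term corresponding to the unique full subset $\phi_N^1$ survives. Within it, each exponential $e^{-(\varphi P_j d_{t,r}^\alpha+d_{j_k,r}^\alpha)\sigma_r^2\tau}$ in the numerator and $e^{-d_{j_k,r}^\alpha\sigma_r^2\tau}$ in the denominator tends to zero, so each factor of the inner product collapses to $\lambda/(\varphi P_j d_{j_k,r}^{-\alpha}d_{t,r}^\alpha+1)$, and taking one minus the product over $k=1,\dots,N$ reproduces \eqref{outage infty}.

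I do not anticipate a serious technical obstacle: both limits reduce to elementary exponential evaluations combined with the atomic concentration of $\{\mathcal{P}_{\phi_m^s}\}$. The only minor subtlety worth a sanity check is to verify that no summand in the $\tau\to 0$ limit yields an indeterminate form, since the denominator factor $1-e^{-d_{j_k,r}^\alpha\sigma_r^2\tau}$ inside the product vanishes together with the prefactor $p_{j_k}$ sitting in $\mathcal{P}_{\phi_m^s}$. A brief leading-order expansion shows that each ratio inside the product converges to a finite constant, so the bracket is genuinely bounded and the domination argument is rigorous.
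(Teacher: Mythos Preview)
Your proposal is correct and follows essentially the same route as the paper's proof, which simply notes that $p_{j_k}\to 0$ as $\tau\to 0$ and $p_{j_k}\to 1$, $e^{-(\varphi P_j d_{t,r}^\alpha+d_{j_k,r}^\alpha)\sigma_r^2\tau}\to 0$ as $\tau\to\infty$, and substitutes into \eqref{P outage eq}. Your treatment is in fact more careful than the paper's one-liner, since you explicitly address the potential $0/0$ issue in the $\tau\to 0$ regime; just be aware that in the statement of Lemma~\ref{P outage lemma} the factor $\lambda$ should sit outside the product (as the derivation \eqref{P outage proof} and the limit \eqref{outage infty} confirm), so each surviving factor in the $\tau\to\infty$ product is $1/(\varphi P_j d_{j_k,r}^{-\alpha}d_{t,r}^\alpha+1)$ rather than carrying its own copy of $\lambda$.
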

\begin{proof}
	The results follow directly by substituting the results  that $p_k\rightarrow0$ as $\tau\rightarrow0$, and $p_k\rightarrow1$ and $e^{-\left(\varphi P_jd_{t,r}^\alpha+d_{j_k,r}^\alpha\right)\tau}\rightarrow0$ as $\tau\rightarrow\infty$. into \eqref{P outage eq}.
\end{proof}

Corollary \ref{remark out 0 inf} shows that the transmission outage probability tends to be constant as either $\tau\rightarrow0$ or $\tau\rightarrow\infty$, which will be confirmed in Fig. 5.

\subsection{Optimization of Covert Throughput}
Note that the transmission outage probability $\delta$ is a function of both the transmission rate $R$ and the selection threshold $\tau$.
The problem of maximizing the covert throughput $\Omega$ subject to a covertness constraint ${\bar{\xi}}_{\iota}^*\geq\rm{1}-\varepsilon$ for $\iota=1,2$, as shown in \eqref{R c max}, can be solved by first designing the optimal $\tau$ to minimize $\delta$ for a fixed $R$ and then deriving the optimal $R$ to achieve the globally maximum $\Omega^*$.
To begin with, the subproblem of minimizing $\delta$ can be formulated as
\begin{equation}
\min_{\tau>0}\delta(\tau), \quad
\rm{s.t.}\quad{\bar{\xi}}_{\iota}^*\geq\rm{1}-\varepsilon, \forall \iota=1,2,
\end{equation}
with the optimal $\tau^*$ characterized by the following proposition.
\begin{proposition}\label{proposition_opt_tau}
\textit{For any given transmit power $P_t$, jamming power $P_j$, total number of cooperative nodes $N$, and maximum tolerable detection probability $\varepsilon$, the optimal selection threshold $\tau^*$ for minimizing the transmission outage probability $\delta$ satisfies
\begin{equation}\label{optimal tau}
\left.{{\bar{\xi}}_\iota^*}\right|_{\tau=\tau^*}=1-\varepsilon,\quad \iota=1,2.
\end{equation}}
\end{proposition}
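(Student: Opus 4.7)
The plan is to establish Proposition 1 by exploiting the monotonic dependence of both the objective $\delta(\tau)$ and the covertness metric ${\bar{\xi}}_\iota^{*}(\tau)$ on the selection threshold $\tau$. Specifically, I intend to show two monotonicity facts: (i) $\delta(\tau)$ is strictly increasing in $\tau$ on $(0,\infty)$; and (ii) ${\bar{\xi}}_\iota^{*}(\tau)$ is strictly increasing in $\tau$ on $(0,\infty)$ for both $\iota=1,2$. Once these are in hand, the minimization of $\delta$ prefers $\tau$ as small as possible, while the covertness constraint ${\bar{\xi}}_\iota^{*}\geq 1-\varepsilon$ forces $\tau$ to be at least the unique value at which the constraint holds with equality. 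Hence the optimal $\tau^{*}$ must bind the constraint, which is exactly \eqref{optimal tau}.

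For step (i), I would work directly from Lemma \ref{P outage lemma}. Raising $\tau$ increases every selection probability $p_{j_k}=1-e^{-d_{j_k,r}^{\alpha}\sigma_r^{2}\tau}$, so in a coupling sense the random set of active jammers becomes larger. Moreover, conditioned on $\mathrm{J}_k$ being selected, the truncated PDF \eqref{pdf h_{jk,r}} has support $[0,d_{j_k,r}^{\alpha}\sigma_r^{2}\tau]$ which widens with $\tau$, so the conditional interference power is stochastically larger. Both effects strictly increase the SINR-outage event $\{|h_{t,r}|^{2}d_{t,r}^{-\alpha}<\varphi(\sum_{j_k\in\phi_m^s}|h_{j_k,r}|^{2}d_{j_k,r}^{-\alpha}+\sigma_r^{2})\}$. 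Alternatively, one can differentiate \eqref{P outage eq} with respect to $\tau$ term-by-term; the explicit expression makes the sign of $\partial\delta/\partial\tau$ checkable, and it comes out strictly positive.

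For step (ii), a coupling argument again captures the essence. Increasing $\tau$ enlarges the jammer set in the stochastic-dominance sense, hence the aggregate jamming power $\sigma_j^{2}$ at W becomes stochastically larger; this blurs the distinction between $\mathcal{H}_0$ and $\mathcal{H}_1$ in \eqref{T_e}. To make the claim rigorous I would argue at the level of the inner minimand in \eqref{expected min detection} and \eqref{expected min detection_2}: for any fixed $\mu$, the function $\xi(\mu;\tau)$ in Corollary \ref{theorem_detection_error_pro} (respectively $\bar{\xi}(\mu;\tau)$ in Theorem \ref{expected detection error_theorem}) is non-decreasing in $\tau$, because increasing $\tau$ shifts probability mass from low-cardinality jammer subsets toward high-cardinality ones in Lemma \ref{jammer_set_pro}, and each $\Gamma(v_{m,s},\cdot)/\Gamma(v_{m,s})$ increases as $m$ grows (more active jammers make the false-alarm tail heavier and drag miss-detection up for detectable $\mu$). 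Taking the pointwise infimum over $\mu$ preserves monotonicity, yielding ${\bar{\xi}}_\iota^{*}(\tau)$ non-decreasing; continuity combined with the boundary behaviour ${\bar{\xi}}_\iota^{*}(0)=0$ and $\lim_{\tau\to\infty}{\bar{\xi}}_\iota^{*}\geq 1-e^{-\rho_2}$ (from Corollary \ref{corollary_error_inf} and the discussion following it) guarantees that a unique $\tau^{*}$ satisfying \eqref{optimal tau} exists whenever $\varepsilon$ is in the feasible range.

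Combining the two monotonicities finishes the argument: if some feasible $\tau'>\tau^{*}$ were optimal, then $\delta(\tau')>\delta(\tau^{*})$ by (i), contradicting optimality of $\tau'$; if some feasible $\tau'<\tau^{*}$ existed, then by (ii) we would have ${\bar{\xi}}_\iota^{*}(\tau')<{\bar{\xi}}_\iota^{*}(\tau^{*})=1-\varepsilon$, violating the covertness constraint. Therefore the optimizer must coincide with the boundary point characterized by \eqref{optimal tau}. The main obstacle is step (ii): establishing strict monotonicity of ${\bar{\xi}}_\iota^{*}$ after the inner minimization over $\mu$, since the optimal $\mu$ itself depends on $\tau$ and one must rule out that the detector can compensate by re-tuning $\mu$. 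I expect to handle this either via an envelope-type argument, or—more cleanly—by showing pointwise monotonicity in $\tau$ for every $\mu$ and then invoking the fact that the infimum of a family of non-decreasing functions is non-decreasing.
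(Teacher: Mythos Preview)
Your proposal is correct and follows essentially the same approach as the paper: establish that both $\delta$ and ${\bar{\xi}}_\iota^{*}$ are monotonically increasing in $\tau$, so minimizing $\delta$ drives $\tau$ down until the covertness constraint becomes active. The paper's proof is a brief sketch asserting these two monotonicities without the detailed justifications you provide; in particular, your attention to the inner minimization over $\mu$ in step~(ii)---showing pointwise monotonicity in $\tau$ for each fixed $\mu$ and then passing to the infimum---fills in a step the paper leaves implicit.
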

\begin{proof}
According to the jammer selection criteria described in \eqref{characteristic I_k}, it is obvious that using a larger value of selection threshold $\tau$ activates more jammers and causes more severe interference to R, thus resulting in a larger transmission outage probability $\delta$.
In that sense, the optimal $\tau^*$ that minimizes $\delta$ should be the minimum achievable value of $\tau$ that meets the covertness constraint ${\bar{\xi}}_{\iota}^*\geq\rm{1}-\varepsilon$ for $\iota=1,2$.
Note that the confusion on the detection at W created by jammers weakens as $\tau$ decreases, which yields a smaller minimum average detection error probability ${\bar{\xi}}_{\iota}^*$.
In other words, ${\bar{\xi}}_{\iota}^*$ is a monotonically increasing function of $\tau$.
Based on the above discussion, it is not difficult to conclude that the optimal $\tau^*$ satisfies ${\bar{\xi}}_{\iota}^*=\rm{1}-\varepsilon$, and the proof is completed.
\end{proof}

From \eqref{optimal tau}, we can conveniently obtain the optimal $\tau^*$ by calculating the inverse of ${\bar{\xi}}_{\iota}^*$ at $1-\varepsilon$, e.g., using a simple bisection method.
Afterwards, we proceed to the optimal transmission rate $R$ by solving the following subproblem.
\begin{equation}
\max_{R>0}\Omega(R)=R(1-\delta(R)),
\end{equation}
where $\delta(R)$ is an increasing function of $R$.
We can show that $\Omega\rightarrow0$ as either $R\rightarrow0$ or $R\rightarrow\infty$ due to $\delta(R)\rightarrow1$ for the latter.
This indicates that neither a too large nor a too small covert rate can produce a high covert throughput, and we should carefully design the covert rate to improve the covert throughput.
Moreover, as {will} be verified in the numerical results, the covert throughput $\Omega(R)$ is a first-increasing-then-deceasing function of $R$.
As a consequence, the optimal $R^*$ that maximizes $\Omega(R)$ can be efficiently calculated using the bisection method.

By now, we have obtained the optimal selection threshold $\tau^*$ and covert rate $R^*$, substituting which into \eqref{R c def} yields the maximum covert throughput $\Omega^*$.

\section{Numerical Results}
In this section, we present numerical results to validate our theoretical analyses and to study the achievable performance of covert communications for the multi-jammer cooperative networks.
 For simplicity, $N$ cooperative nodes are randomly generated in a circle centered at
T with a radius $2d_{t,r}$, where $d_{t,r}$ is the distance between T and R.
{\color{black}Unless specified otherwise, we assume that W does not know the instantaneous CSI of the detection channel $h_{t,w}$ between itself and T, which is rational due to the covert transmission from T.
	We define a reference distance $d_0=1000$m, and set $d_{t,r}=d_0$, $\alpha=4$, $P_t=10$dBm, $\sigma_w^2=-120$dBm, and $\sigma_r^2=-120$dBm.
	In what follows, we will first evaluate the minimum average detection error probability at W under the two scenarios where W does and does not know  $h_{t,w}$, and consider the transmission outage probability of the covert communication between T and R.
	We will then examine the influence of various key arguments on the covert throughput focusing on the case without knowing the channel $h_{t,w}$ at W.}

\begin{figure}[!t]
	\centering
	\includegraphics[width =3.5in]{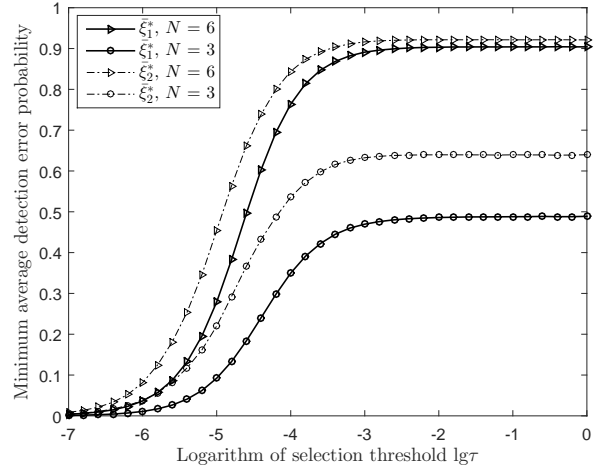}
	\caption{${\bar{\xi}}_1^*$ and ${\bar{\xi}}_2^*$ vs. lg$\tau$ for different values of $N$, with $P_j = 10$dBm and $d_{t,w}=1.2d_0$.}
	\label{ercompare tu}
	\vspace{-0.0 cm}
\end{figure}

In Fig. \ref{ercompare tu}, we aim to examine the detection performance with and without the knowledge of the instantaneous CSI of $h_{t,w}$ at W, and we illustrate the corresponding minimum average detection error probabilities ${\bar{\xi}}_1^*$ and ${\bar{\xi}}_2^*$ versus lg$\tau$ for different values of $N$.
It is {expected} that both ${\bar{\xi}}_1^*$ and ${\bar{\xi}}_2^*$ increase with $N$, since the increase of $N$ creates more randomness of the aggregate jamming power to W, which hinders its detection.
It can be seen that ${\bar{\xi}}_1^*$ is always lower than ${\bar{\xi}}_2^*$, and the gap enlarges for a smaller $N$.
This is as expected, meaning that once the instantaneous CSI of the detection channel is acquired by W, the covertness will be further compromised.

\begin{figure}[!t]
	\centering
	\includegraphics[width =3.5in]{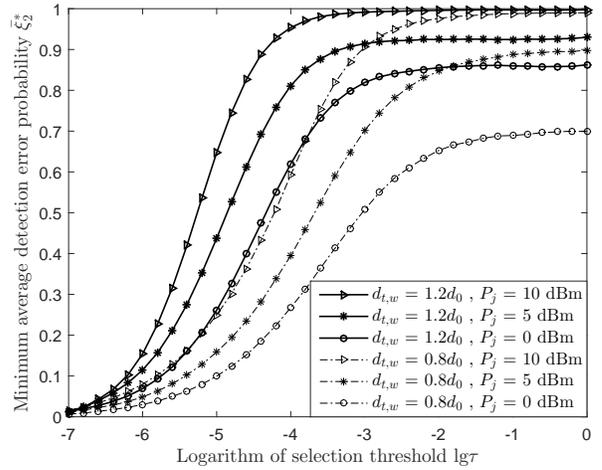}
	\caption{${\bar{\xi}}_2^*$ vs. lg$\tau$ for different values of $P_j$ and $d_{t,w}$, with $N=10$.}
	\label{er mt tu}
	\vspace{-0.0 cm}
\end{figure}

Fig. \ref{er mt tu} plots the minimum average detection error probability ${\bar{\xi}}_2^*$ versus lg$\tau$ for different values of $P_j$ and $d_{t,w}$. We note that when $\tau\rightarrow\infty$, ${\bar{\xi}}_2^*$ approaches a fixed value, which verifies Corollary \ref{corollary_error_inf}.
We find that ${\bar{\xi}}_2^*$ increases with $P_j$, since large jamming power shields the covert communication more effectively from being detected by W. We also show that ${\bar{\xi}}_2^*$ decreases significantly as $d_{t,w}$ decreases. The reason behind is that the path loss between T and W becomes smaller as $d_{t,w}$ decreases, which makes the power of covert communications much more dominate compared with the jamming power, thus remarkably improving the detection accuracy for W.

\begin{figure}[!t]
\centering
\includegraphics[width =3.5in]{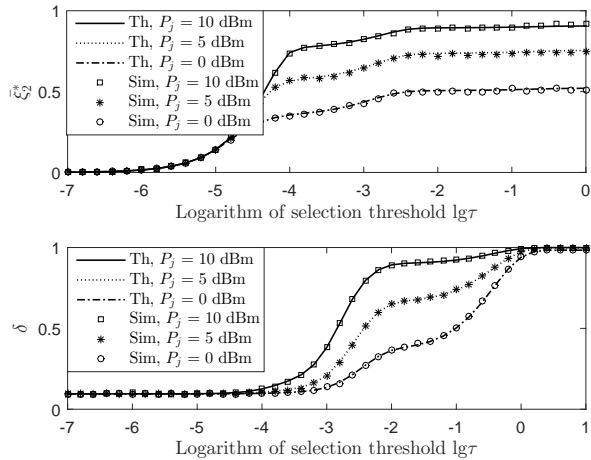}
\caption{${\bar{\xi}}_2^*$ and $\delta$ vs. lg$\tau$ for different values of $P_j$, with $N=4$ and $R=1$bits/s/Hz.}
\label{Th sim tu}
\vspace{-0.0 cm}
\end{figure}

Fig. \ref{Th sim tu} shows the minimum average detection error probability ${\bar{\xi}}_2^*$ and the transmission outage probability $\delta$ versus lg$\tau$ for different values of jamming power $P_j$. It can be seen that the Monte-Carlo simulation results match well with the theoretical values {from \eqref{expected detection error} and \eqref{P outage eq}, respectively}, which verifies the correctness of the theoretical analyses and the accuracy of the  approximation method given in \eqref{pdf Xmi}.
We can also observe that both ${\bar{\xi}}_2^*$ and $\delta$ are monotonically increasing functions of lg$\tau$, which is consistent with what is concluded in Proposition \ref{proposition_opt_tau}.

\begin{figure}[!t]
\centering
\includegraphics[width =3.5in]{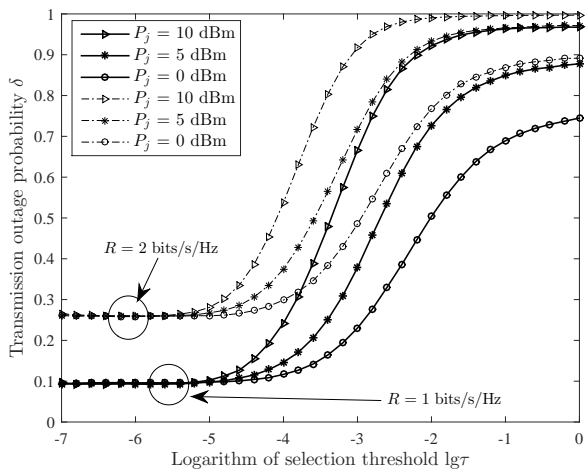}
\caption{$\delta$ vs. lg$\tau$ for different values of $P_j$ and $R$, with $N=10$ and $d_{t,w}=1.2d_0$.}
\label{out mt tu}
\vspace{-0.0 cm}
\end{figure}

Fig. \ref{out mt tu} depicts the transmission outage probability $\delta$ versus lg$\tau$ for different jamming power $P_j$ and covert rate $R$.
We observe that $\delta$ monotonically increases with  lg$\tau$, $R$, and $P_j$, and approaches a constant value as either $\tau\rightarrow0$ or $\tau\rightarrow\infty$, which is consistent with Corollary \ref{remark out 0 inf}.
In addition, the convergence value of $\delta$ as $\tau\rightarrow0$ is solely determined by $R$ and is independent of $P_j$, since no jammer is likely to be activated for a small enough $\tau$.

\begin{figure}[!t]
\centering
\includegraphics[width =3.5in]{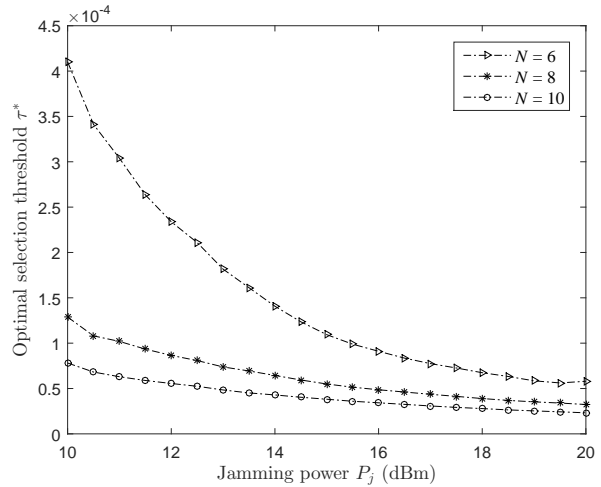}
\caption{$\tau^*$ vs. $P_j$ for different values of $N$, with $\varepsilon=0.1$.}
\label{tm mt tu}
\vspace{-0.0 cm}
\end{figure}

Fig. \ref{tm mt tu} shows the optimal selection threshold $\tau^*$ versus $P_j$ for different values of $N$.
We show that $\tau^*$ is a monotonically decreasing function of $N$ and $P_j$.
This is because the uncertainty of differentiating desired signal and jamming power increases with $N$ and $P_j$, which produces a larger detection error probability at W.
As a result, smaller $\tau$ is allowed to satisfy a certain covertness constraint for larger $N$ and $P_j$.

\begin{figure}[!t]
\centering
\includegraphics[width =3.5in]{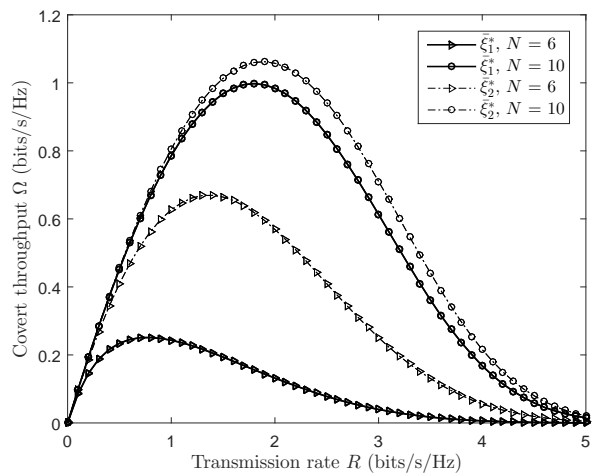}
\caption{$\Omega$ vs. $R$ for different values of $N$, with $P_j=10$dBm, $d_{t,w}=1.2d_0$, and $\varepsilon=0.1$.}
\label{Rccompare tu}
\vspace{-0.0 cm}
\end{figure}

Fig. \ref{Rccompare tu} illustrates the covert throughput $\Omega$ versus the covert rate $R$ for different numbers of cooperative nodes $N$, considering the situations with and without the instantaneous CSI of the detection channel $h_{t,w}$.
We see that $\Omega$ increases first and then decreases as $R$ increases, and approaches zero as either $R\rightarrow0$ or $R\rightarrow\infty$.
It can also be observed that $\Omega$ increases as $N$ becomes larger.
This is because the increase of $N$ will undoubtedly create more randomness of total jamming power and degrade the detection accuracy at W, whereas the transmission outage probability between T and R will not increase significantly owing to the jammer selection criterion.
 We also find that $\Omega$ with ${\bar{\xi}}_2^*$ is larger than that with ${\bar{\xi}}_1^*$ particularly for the small $N$ regime, since a higher level of covertness can be promised when W is unaware of the instantaneous CSI of the channel from T to itself.

\begin{figure}[!t]
	\centering
	\includegraphics[width =3.5in]{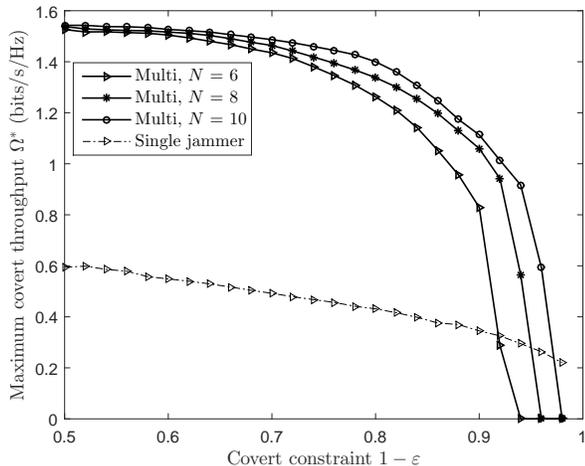}
	\caption{$\Omega^*$ vs. $1-\varepsilon$ for different values of $N$, with $P_j=10$dBm and $d_{t,w}=1.2d_0$.}
	\label{covertness tu}
	\vspace{-0.0 cm}
\end{figure}
Fig. \ref{covertness tu} plots the maximum covert throughput $\Omega^*$ versus $1-\varepsilon$ for different numbers of cooperative nodes $N$.
We also adopt the single-jammer scheme adopted in \cite{Sobers2017Covert} as the benchmark scheme, where variable jamming power is exploited to confuse the detector.
Comparing the two schemes, we show that $\Omega^*$ for the multi-jammer scheme significantly outperforms  that for the single-jammer scheme for a quite wide range of $\varepsilon$.
This demonstrates the superiority of our proposed multi-jammer scheme in terms of covert communications.
Nevertheless, we can observe that the single-jammer scheme appears to be superior for an extremely small value of $\varepsilon$ which corresponds to a high level of covertness requirement.
The underlying reason is that the randomness of total jamming power created by a certain number of cooperative nodes might not be sufficient to satisfy a stringent enough covert constraint, whereas the single-jammer scheme can always guarantee uncertainty at W by enlarging the jamming power budget and adopting random power continuously.

\begin{figure}[!t]
\centering
\includegraphics[width =3.5in]{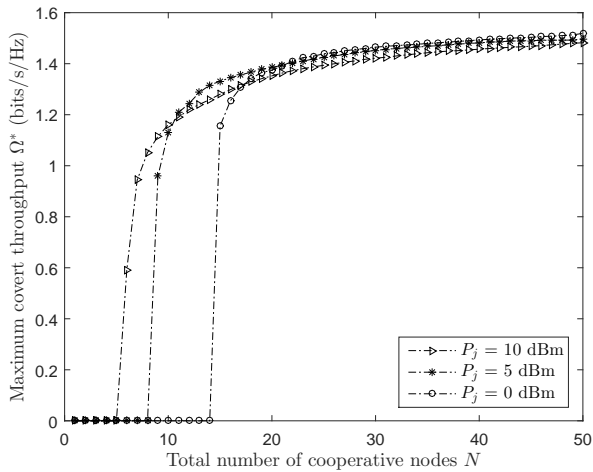}
\caption{$\Omega^*$ vs. $N$ for different values of $P_j$, with $d_{t,w}=1.2d_0$ and $\varepsilon=0.1$.}
\label{Rcm mt tu}
\vspace{-0.0 cm}
\end{figure}

Fig. \ref{Rcm mt tu} depicts the maximum covert throughput $\Omega^*$ versus $N$ for different values of $P_j$.
{It is as expected that $\Omega^*$ increases with $N$, and we can see that as $N$ continuously increases, the improvement of covert throughput becomes less remarkable.
This suggests that a moderate number of cooperative nodes should be favorable for assisting the covert communication.}
In addition, there exists a minimum number $N_{min}$ of cooperative nodes below which the covertness constraint cannot be satisfied under a certain level of jamming power, and the value of $N_{min}$ becomes smaller as $P_j$ increases, e.g., $N_{min}= 5$ for $P_j=10$dBm and $N_{min}= 8$ for $P_j=5$dBm.
Interestingly, we also observe that a smaller (larger) $P_j$ can produce a higher $\Omega^*$ when $N$ is large (small) enough, which is attributed to the trade-off between covertness and reliability.

\section{Conclusion}
This paper investigated covert communications assisted by multiple friendly cooperative nodes over slow fading channels, in which transmitter T aims to covertly deliver messages to receiver R under the surveillance of warden W, while a portion of the cooperative nodes will be intelligently selected as jammers to send interference signals to confuse W.
We proposed an opportunistic jammer selection scheme, based on which we analyzed the average detection error probability at W and the transmission outage probability of the covert communication between T and R.
We then jointly designed the optimal selection threshold and the covert rate to maximize the covert throughput subject to a certain covertness constraint.
By comparing with the single-jammer case, numerical results demonstrated the superiority of our proposed multi-jammer scheme in terms of covert throughput.
We showed that there exists an optimal jammer selection threshold for maximizing the covert throughput, and the optimal value decreases as either the number of cooperative nodes or the jamming power increases.
We also revealed that a minimum number of cooperative nodes should be deployed in order to meet a certain level of covertness requirement; as the number of cooperative nodes continuously increases, the covert throughput first significantly improves but then reaches saturation.

\appendix
{\color{black}\subsection{Proof of the Sufficiency of Radiometer}\label{proof_sufficiency_radiometer}
To justify the use of a radiometer as the detector, we prove that the test statistic $T_w$ in \eqref{T_e} is sufficient for W's detection.

The received signal at W in \eqref{y_e} can be rewritten as
\begin{equation}\label{received signal at W}
\bm{y}_w[i]=\sqrt{P_t}\frac{h_{t,w}}{d_{t,w}^{\alpha/2}}\bm{x}_t[i]\theta+
\sum\limits_{k=1}^{N}{\mathcal{I}_{j_k}\sqrt{P_j}\frac{h_{j_k,w}}{d_{j_k,w}^{\alpha/2}}\bm{v}_{j_k}[i]}+\bm{n}_w[i],
\end{equation}
where $\theta$ represents the indicator of whether T is transmitting or not, with $\theta = 1$ meaning that T is transmitting while $\theta = 0$ otherwise.
Since we assume that $\bm{x}_t [i],\bm{v}_{j_k} [i]\sim\mathcal{CN} (0, 1)$, $\bm{n}_w [i]\sim\mathcal {CN} (0, \sigma_w ^2)$, we have
$\bm{y}_w [i]\sim\mathcal {CN} (0, \sigma_\theta ^2)$, where $\sigma_\theta ^2=\sigma_c^2\theta+\sigma_j^2+\sigma_w^2$ with $\sigma_c^2$ and $\sigma_j^2$ defined in \eqref{T_e}.
Since W is unaware of the instantaneous CSI between ${\rm J}_k$ and R, the number of selected jammers is not available at W such that $\sigma_j^2$ becomes a random variable to W.
However, the PDF $f_{\sigma_{j}^{2}}(x)$ of $\sigma_j^2$ is assumed known at W.

According to the Fisher-Neyman factorization theorem, W's test statistic $T_w=(1/n)\sum_{i=1}^{n}\left|\bm{y}_w[i]\right|^2$ is sufficient for the indicator $\theta$ if and only if the conditioned probability $\mathcal{P}(\bm{y}_w|\theta)$ depends on $\bm{y}_w$ only through $T_w$.

From \eqref{received signal at W}, $\mathcal{P}(\bm{y}_w|\theta)$ can be calculated as
\begin{align}
\mathcal{P}(\bm{y}_w|\theta) &=\int_0^{\infty} \prod_{i=1}^{n} \left\{\frac{1}{\sqrt{2 \pi \sigma_{\theta}^{2}}}e^{-\frac{\left|\bm{y}_{w}[i]\right|^{2}}{2 \sigma_{\theta}^{2}}}\right\}  f_{\sigma_{j}^{2}}\left(x\right) dx \notag\\
&=\int_0^{\infty}\left(2 \pi \sigma_{\theta}^{2}\right)^{-\frac{n}{2}} e^{-\frac{n}{2 \sigma_{\theta}^{2}} T_{w}} f_{\sigma_{j}^{2}}\left(x\right) dx.
\end{align}
We clearly show that $\mathcal{P}(\bm{y}_w|\theta)$ relies on $\bm{y}_w$ only via $T_w$ regardless of the form of  $f_{\sigma_{j}^{2}}\left(x\right)$. This completes the proof.
}

\subsection{Proof of Lemma \ref{gamma_approx}}\label{proof_lemma_2}
The $y$-th cumulant of a random variable $X_{m,s}$ is defined as
\begin{equation}\label{Q Xmi i}
Q_{X_{m,s}}^{\left(y\right)}=\left.\frac{d^y\mathbb{E}_{X_{m,s}}\left[e^{\omega X_{m,s}}\right]}{d\omega^y}\right|_{\omega=0}.
\end{equation}

The mean and variance of $X_{m,s}$ are $\mu_{X_{m,s}}=Q_{X_{m,s}}^{\left(1\right)}$ and $\sigma_{X_{m,s}}^2=Q_{X_{m,s}}^{\left(2\right)}-\left(Q_{X_{m,s}}^{\left(1\right)}\right)^2$, respectively.
Hence, the parameters $v_{m,s}$ and $\omega_{m,s}$ can be calculated as $v_{m,s}={\mu_{X_{m,s}}^2}/{\sigma_{X_{m,s}}^2}$ and $\omega_{m,s}={\sigma_{X_{m,s}}^2}/{\mu_{X_{m,s}}}$. Recall that $X_{m,s}=\sum_{j_k\in\phi_m^s}{\left|h_{j_k,w}\right|^2d_{j_k,w}^{-\alpha}}$. Due to the mutual independence among $\left\{h_{j_k,w}\right\}$ for $j_k\in\phi_m^s$, we can calculate $\mathbb{E}_{X_{m,s}}\left[e^{\omega X_{m,s}}\right]$ for a deterministic $\phi_m^s$ as follows,
\begin{equation}\label{E Xmi}
\mathbb{E}_{X_{m,s}}\left[e^{\omega X_{m,s}}\right]=\prod_{j_k\in\phi_m^s}{\mathbb{E}_{\left|h_{j_k,w}\right|^2}\left[e^{-\omega\left|h_{j_k,w}\right|^2d_{j_k,w}^{-\alpha}}\right]}.
\end{equation}

Denote $g_{j_k,w}\left(\omega\right)=\mathbb{E}_{\left|h_{j_k,w}\right|^2}\left[e^{-\omega\left|h_{j_k,w}\right|^2d_{j_k,w}^{-\alpha}}\right]$ such that $\mathbb{E}_{X_{m,s}}\left[e^{\omega X_{m,s}}\right]=\prod_{j_k\in\phi_m^s}{g_{j_k,w}\left(\omega\right)}$ and $Q_{X_{m,s}}^{\left(y\right)}=\left.\left({d^y\prod_{j_k\in\phi_m^s}{g_{j_k,w}\left(\omega\right)}}/{d\omega^y}\right)\right|_{\omega=0}$.
Then, we have $\left.g_{j_k,w}\left(\omega\right)\right|_{\omega=0}=1$, $\left.\left({dg_{j_k,w}\left(\omega\right)}/{d\omega}\right)\right|_{\omega=0}=d_{j_k,w}^{-\alpha}$, and $\left.\left({d^2g_{j_k,w}\left(\omega\right)}/{d\omega^2}\right)\right|_{\omega=0}=2d_{j_k,w}^{-2\alpha}$. Based on these results, the first and second cumulants of $X_{m,s}$ can be respectively calculated as
\begin{equation}\label{Q Xmi 1}
\begin{split}
Q_{X_{m,s}}^{\left(1\right)}
&=
\sum_{j_k\in\phi_m^s}\prod_{j_p\in\phi_m^s\backslash j_k}\left.\left(g_{j_p,w}\left(\omega\right)\frac{dg_{j_k,w}\left(\omega\right)}{d\omega}\right)\right|_{\omega=0}\\
&=\sum_{j_k\in\phi_m^s} d_{j_k,w}^{-\alpha},
\end{split}
\end{equation}
\begin{align}\label{Q Xmi 2}
Q_{X_{m,s }}^{\left(2\right)}&=\sum_{j_k\in\phi_m^s}\left(\prod_{j_p\in\phi_m^s\backslash j_k}g_{j_p,w}\left(\omega\right)\frac{d^2g_{j_k,w}\left(\omega\right)}{d\omega^2}
+\frac{dg_{j_k,w}\left(\omega\right)}{d\omega}\right.\notag\\
&\quad\left.\left.\times\sum_{j_p\in\phi_m^s\backslash j_k}\prod_{j_q\in\phi_m^s\backslash \left\{j_k,j_p\right\}}g_{j_q,w}
\left(\omega\right)\frac{dg_{j_p,w}\left(\omega\right)}{d\omega}\right)\right|_{\omega=0}\notag\\
&=\sum_{j_k\in\phi_m^s}\left(2d_{j_k,w}^{-2\alpha}+d_{j_k,w}^{-\alpha}\sum_{j_p\in\phi_m^s\backslash j_k} d_{j_p,w}^{-\alpha}\right).
\end{align}

After some algebraic manipulations, we can obtain $\mu_{X_{m,s}}=\sum_{j_k\in\phi_m^s} d_{j_k,w}^{-\alpha}$ and $\sigma_{X_{m,s}}^2=\sum_{j_k\in\phi_m^s} d_{j_k,w}^{-2\alpha}$, substituting which into $v_{m,s}$ and $\omega_{m,s}$ completes the proof.

\bibliographystyle{IEEEtran}
\bibliography{covertbib}

\end{document}